\newcommand{\CC}{\mathbb{C}}
\newcommand{\FF}{\mathbb{F}}
\newcommand{\lemacmat}{\mathcal M}  
\newcommand{\blemacmat}{\mathcal B} 
\newcommand{\macmat}{\hat{\lemacmat}}
\newcommand{\bmacmat}{\hat{\blemacmat}}
\newcommand{\bvec}{\vec{b}}
\newcommand{\s}{a}  
\newcommand{\svec}{\vec{y}} 
\newcommand{\sparsity}[1]{\# #1}
\DeclareMathOperator{\maxdeg}{maxdeg}
\newcommand{\tqlscnnospace}{tQLScn} 
\newcommand{\tqlscn}{\tqlscnnospace\space}  
\newtheorem{theorem}{Theorem}[section]
\newtheorem{lemma}[theorem]{Lemma}
\newtheorem{corollary}[theorem]{Corollary}
\newtheorem{definition}[theorem]{Definition}
\newtheorem{problem}[theorem]{Problem}
\theoremstyle{definition}
{

}
\newcommand{\floor}[1]{\left\lfloor #1 \right\rfloor}
\newcommand{\tr}{\operatorname{Tr}}
\newcommand{\ip}[2]{\left\langle #1 , #2\right\rangle}
\mathchardef\mhyphen="2D
\newcommand{\bracket}[2]{\langle #1 | #2 \rangle}
\newcommand{\calF}{\mathcal{F}}
\newcommand{\C}{\mathbb{C}}
\def\({\left(}
\def\){\right)}
\def\G{\mathbb{G}}
\definecolor{greenn}{rgb}{0,0.8,0.2}
\definecolor{bluue}{rgb}{0.3,0,0.7}
\newcommand{\anote}[1]{\authnote{András}{#1}{red}}
\newcommand{\jnote}[1]{\authnote{Jianqiang}{#1}{blue}}
\newcommand{\eps}{\varepsilon}
\newcommand{\nrm}[1]{\left\lVert #1 \right\rVert}
\newcommand{\bigO}[1]{\mathcal{O}\left( #1 \right)}
\newcommand{\bigOt}[1]{\widetilde{\mathcal{O}}\left( #1 \right)}
\newcommand\polylog[1]{\mathrm{poly}\log\left( #1 \right)}
\begin{document}
\title{Limitations of the Macaulay matrix approach for using the HHL algorithm to solve multivariate polynomial systems}
\author{Jintai Ding}
\affiliation{University of Cincinnati, OH, USA }
\email{jintai.ding@gmail.com}
 \thanks{Yau Math.\ Sci.\ Center, Tsinghua University}
 \thanks{Ding Lab, Beijing Institute of Mathematical Sciences and Applications}

\author[]{Vlad Gheorghiu}
\email{vlad.gheorghiu@uwaterloo.ca}
\affiliation{Institute for Quantum Computing / Dept.\ of Combinatorics \& Optimization, University of Waterloo, ON, Canada}
\thanks{softwareQ Inc.}

\author{Andr\'{a}s Gily\'{e}n}
\affiliation{Institute for Quantum Information and Matter, Caltech, Pasadena CA, USA} \thanks{Alfréd Rényi Institute of Mathematics, Budapest, Hungary  }
\email{ gilyen@renyi.hu}

\author{Sean Hallgren}
\affiliation{Department of Computer Science and Engineering, Pennsylvania State University, PA, USA }
\email{hallgren@cse.psu.edu}
\author{Jianqiang Li}
\affiliation{Department of Computer Science and Engineering, Pennsylvania State University, PA, USA}
\email{ jxl1842@psu.edu}

\date{07-21-2023}

\maketitle
\vspace{-0.7mm}
\begin{abstract}
Recently Chen and Gao~\cite{ChenGao2017} proposed a new quantum algorithm for Boolean polynomial system solving, motivated by the cryptanalysis of some post-quantum cryptosystems. The key idea of their approach is to apply a Quantum Linear System (QLS) algorithm to a Macaulay linear system over $\mathbb{C}$, which is derived from the Boolean polynomial system. The efficiency of their algorithm depends on the condition number of the Macaulay matrix. In this paper, we give a strong lower bound on the condition number as a function of the Hamming weight of the Boolean solution, and show that in many (if not all) cases a Grover-based exhaustive search algorithm outperforms their algorithm. Then, we improve upon Chen and Gao's algorithm by introducing the Boolean Macaulay linear system over $\mathbb{C}$ by reducing the original Macaulay linear system. This improved algorithm could potentially significantly outperform the brute-force algorithm, when the Hamming weight of the solution is logarithmic in the number of Boolean variables. 

Furthermore, we provide a simple and more elementary proof of correctness for our improved algorithm using a reduction employing the Valiant-Vazirani affine hashing method, and also extend the result to polynomial systems over $\mathbb{F}_q$ improving on subsequent work by Chen, Gao and Yuan \cite{ChenGao2018}. We also suggest a new approach for extracting the solution of the Boolean polynomial system via a generalization of the quantum coupon collector problem \cite{arunachalam2020QuantumCouponCollector}. 
\end{abstract}


\section{Introduction}\label{sct::intro}

Solving systems of multivariate polynomial equations is a fundamental problem that is NP-complete even when the polynomials are restricted over $\mathbb{F}_2$. The problem can be reduced to solving an exponential number of linear equations via the so-called Macaulay matrix, which holds coefficients of linear equations that come from the input polynomials, and multiples of them (multiplying each polynomial by each monomial up to a certain degree). Each monomial is represented by a new variable, recasting the polynomial equations and their multiples as linear equations. The usual classical approach to solve a polynomial system is based on computing the Gr\"obner basis of the corresponding polynomial ideal by triangularizing the Macaulay matrix. There is a vast literature on characterizing and improving the complexity of solving various types of polynomial systems using the Macaulay matrix~\cite{ars2004comparison,caminata2017solving,courtois2000efficient,ding2013solving,diem2004xl,perret2016bases,wiesinger2015grobner}.   
 
In quantum computing, the HHL~\cite{harrow2009QLinSysSolver} Quantum Linear System (QLS) algorithm outputs a quantum state $\ket{x}$ such that $\mu A\ket{x} = \ket{b}$ for an exponentially large matrix $A$ with certain properties, and a quantum state $\ket{b}$, in time $\bigOt{\kappa^2s^2}$,\footnote{We denote $\bigO{T\cdot\polylog{T}\cdot\mathrm{poly}(1/\eps)}$ by $\bigOt{T}$, where $\eps$ is the required precision of the solution.} where $\kappa$ is the \emph{condition number} of $A$, $\mu$ is a normalization factor, and $s$ is the sparsity of the matrix $A$, while state-of-the-art QLS algorithms~\cite{ambainis2010VTAA,childs2015QLinSysExpPrec,chakraborty2018BlockMatrixPowers,gilyen2018QSingValTransf,subasi2019QAlgSysLinEqsAdiabatic,lin2019OptimalQEigenstateFiltering} have complexity $\bigOt{\kappa s}$. Although, the QLS algorithm is $\mathsf{BQP}$-complete~\cite{harrow2009QLinSysSolver}, meaning that it captures all essential features of quantum computing, a natural ``killer-application'' is still to be discovered -- showing the difficulty of finding a practically interesting problem instance that satisfies all stringent conditions. For example, to efficiently solve the classical equation $Ax=b$ using the original HHL algorithm, where implicit access is given to an exponentially large matrix $A$ and $b$, the following must be satisfied: the state $\ket{b}$ can be efficiently prepared, the sought data can be efficiently extracted from the output state $\ket{x}$, and the matrix $A$ should be sparse and well-conditioned~\cite{aaronson2015ReadTheFinePrint}. 

Chen and Gao~\cite{ChenGao2017} made an interesting connection between the exponential size Macaulay matrix and the HHL algorithm. While they use Gr\"obner bases in their proof of correctness, they do not explicitly compute the Gr\"obner basis and instead use the HHL algorithm to solve the exponentially large system of linear equations resulting from the Macaulay matrix. They show that the access requirements that usually cause so much trouble, can all be resolved for this application, namely: they can efficiently compute the entries of an appropriate sparse matrix $A$, prepare $\ket{b}$, and extract the answer from $\ket{x}$.  However, a major question was left open: what is the condition number of the matrices, driving the running time?  Intuitively, for worst case instances of polynomial systems, the condition number of the resulting matrix should be large because the approach can solve NP-complete problems. This being said, the analysis of the condition number was left open, both in general, and for special cases such as breaking cryptosystems which have distributions over specific problem instances that might be easier than the worst case. Therefore, the algorithm of Chen and Gao~\cite{ChenGao2017} together with the follow-up work of Chen, Gao, and Yuan \cite{ChenGao2018} presented a potential quantum threat on multivariate cryptosystems. However, there was no consensus on the strength of this potential quantum attack, as its cryptanalysis was wide open.

In this paper we prove an exponential lower bound on the condition number $\kappa$ of the matrix $A$ related to the Boolean polynomial system, which shows that the quantum algorithm in~\cite{ChenGao2017} takes exponential time in the worst case.  We also give a Grover-based brute-force search algorithm that outperforms their quantum algorithm for solving Boolean polynomial systems when there is a unique solution or all solutions have the same Hamming-weight. Specifically, in the unique solution case we give a simple proof that the condition number $\kappa$ is $\Omega((3n)^{h/2})$, where $h$ is the Hamming weight of the solution to the original $n$-variable Boolean polynomial system. Meanwhile, a simple Grover-based brute-force search algorithm over the possible assignments to the variables takes time $\bigO{\sqrt{\binom{n}{h}}}$, where $ \sqrt{\binom{n}{h}} \leq \left(\frac{en}{h} \right)^{h/2} \leq (3n)^{h/2}$.

In fact, we give ``robust'' lower bounds on the condition number by also considering ``truncated'' QLS algorithms~\cite{harrow2009QLinSysSolver,gilyen2018QSingValTransf}.  
Namely, if the singular-values of $A$ are only inverted on a \emph{well-conditioned} subspace and the overlap of the solution $x$ with such a subspace is large enough, then a ``truncated'' QLS algorithm can provide a sufficiently accurate solution $\tilde{x}$. In order to give a bound on the performance of such ``truncated'' versions of the QLS algorithm, we define the concept of the \emph{truncated QLS condition number} $\kappa_b(A):=\nrm{A}\nrm{A^+ b}/\nrm{b}$,\footnote{$A^+$ stands for the (Moore-Penrose) pseudoinverse of $A$, and $\nrm{\cdot}$ for the $\ell_2$ norm of vectors and for the corresponding induced operator norm, i.e., the spectral norm.} which is also a lower bound on $\kappa=\nrm{A}\nrm{A^+}$. All of our lower bounds also apply to the truncated QLS condition number, ruling out further improvement by truncated QLS algorithms. These results provide strong evidence that the quantum algorithm of~\cite{ChenGao2017} (at least in its original form) does not present a fatal cryptanalytic threat, and give generic tools for analyzing the strength of individual cryptosystems against this type of quantum attack. 

Finally, we refine Chen and Gao's algorithm to the point that our lower bound does not always rule out the possibility of a superpolynomial quantum speedup even for unique solutions. In particular, the lower bound changes from $(3n)^{h/2}$ to $2^{h/2}$ on our refined algorithm, so for $h=\Theta(\log n)$ the lower bound is only a polynomial, while the brute-force algorithm takes quasi-polynomial time.  Thus, it is conceivable that the condition number is upper bounded by $\text{poly}(n)$ for some set of interesting input equations, potentially yielding a superpolynomial speedup.  We leave it open to find a problem instance whose associated Macaulay matrix has a small enough condition number so that the running time of our refined quantum algorithm gives a speedup over the best classical or Grover-based algorithm.  Such an example could result in a new type of quantum speedup and one that uses the HHL algorithm in a novel way.

The core ingredient of our refined algorithm is to show that the Macaulay matrix can be simplified to what we call the Boolean Macaulay matrix over $\CC$ by exploiting that the input consists of quadratic polynomials over $\CC$, but restricted to 0/1 solutions. The Boolean Macaulay matrix is a submatrix of the original Macaulay matrix that can be obtained via Gaussian elimination over $\CC$. This construction of the Boolean Macaulay matrix over $\CC$ is different from the Boolean Macaulay matrix over $\mathbb{F}_2$ as defined in~\cite{bardet2013complexity} since they are over different fields. This matrix preserves the solution set while its size is much smaller compared to the original Macaulay matrix -- ultimately leading to a smaller lower bound $\Omega(2^{h/2})$ on the condition number. 

The correctness of our refined algorithm can be shown via the equivalence between the Boolean Macaulay linear system and the Macaulay linear system, where the correctness of the latter has been proven in \cite{ChenGao2017}. For completeness, we also provide a simple self-contained proof of correctness for our improved algorithm in Appendix~\ref{append:simple}, which is more elementary than that of the original algorithm proposed by Chen and Gao~\cite{ChenGao2017}, since our proof does not require Gr\"obner bases. Instead, our proof combines a special case of the reduction in~\cite{ChenGao2018} with the Valiant-Vazirani affine hashing method, reducing any Boolean polynomial system with more than one solution to one that has a unique solution. For a Boolean Macaulay linear system that has a unique solution, we also provide an alternative approach for extracting the Boolean solution of the corresponding Boolean polynomial system from the output quantum state, i.e., the normalized monomial solution vector of the Boolean Macaulay linear system. Specifically, we reformulate this problem as a generalization of the quantum coupon collector problem, and prove that $O(\log n)$ iterations suffice for extracting a solution, whereas Chen and Gao's algorithm uses $O(n)$ iterations. On the other hand, the affine hashing reduction introduces $\bigO{n}$ extra rounds, so the total number of iterations in our algorithm can be bounded as $\bigO{n \log n}$.

\section{Quantum algorithms for solving polynomial systems}
 
 Chen and Gao~\cite{ChenGao2017} proposed using the HHL algorithm to solve a Boolean polynomial system via solving an exponentially large linear system of equations. Now we discuss the two main parameters that appear in the complexity of the QLS algorithm, and their relevance in our case. 
  
\begin{enumerate}
  \item[$\kappa_b(A):$] The \emph{truncated QLS condition number} (\tqlscnnospace) $\kappa_b(A)$ of the QLSP $Ax=b$ is an important parameter related to the time-complexity of ``truncated'' QLS algorithms.\footnote{Note that the truncated QLS condition number gives a lower bound on the performance of truncated QLS alorithms, but it does not characterize their complexity, i.e., there might not exist any truncated QLS algorithm with complexity matching the truncated QLS condition number.} (For simplicity let us assume without loss of generality that $\nrm{A}\leq 1$ and $\nrm{b}=1$.)
  We use a simple Markov-type inequality showing that inverting $A$ via a truncated variant of the QLS algorithm, with (condition number) truncation much below $\kappa_b(A)$, must give a highly inaccurate solution. Indeed, let $S$ be a subspace, which is spanned by right singular vectors of $A$ that have singular value at least $c/\kappa_b(A)$ for some $c>1$. Let $\Pi_S$ be the orthogonal projector on $S$, then $\nrm{\Pi_S x} = \nrm{\Pi_S A^+ b}\leq \nrm{\Pi_S A^+} \nrm{b} \leq  \kappa_b(A)/c$. On the other hand $\nrm{x} = \nrm{A^+ b}\geq \nrm{A} \nrm{A^+ b}/ \nrm{b} =  \kappa_b(A)$, thereby the overlap of $x$ with the subspace $S$ must be relatively small for large $c$. This argument shows, that giving a lower bound on the truncated QLS condition number makes our results stronger, as it does not only bound the complexity of standard QLSP solvers, but also lower-bounds the complexity of fine-tuned truncated variants.
	
 \item[$s:$] In order to efficiently solve the Quantum Linear
          System Problem (QLSP), we need a succinct representation of
          the vector $b$ and the matrix $A$. In our case both
          $b$ and $A$ are $s$-sparse for some polynomially large $s$,
          i.e., they have at most $s$ nonzero entries in every column
          and row. In order to utilize their sparsity, we also need
          to be able to efficiently compute the locations and the
          values of the nonzero entries; this is easy to do in our
          case, since the vector $b$ is sparse and the (Boolean)
          Macaulay matrix $A$ has a quasi-Toeplitz structure.  
\end{enumerate}

More precisely, it suffices to have efficient (quantum) circuits computing the locations and values of the nonzero elements of $A$, allowing us to perform the transformations 

\begin{align}
\ket{i,k}&\longrightarrow\ket{i,\bar{c}(i,k)}\label{eq:prepA1}\\
\ket{i,j,0}&\longrightarrow\ket{i,j,\bar{A}_{ij}},\label{eq:prepA2}
\end{align}
where $i$ labels the row indices of the matrix $\bar{A}$ standing for $A$ and $A^T$, $k$ labels the nonzero entries of $\bar{A}$ (which is assumed to be $s$-sparse), $\bar{c}(i,k)$ represents the column index of the $k$-th nonzero entry of the matrix $\bar{A}$ in row $i$, and $\ket{0}$ in~\eqref{eq:prepA2} represents a (large enough) ancillary system in which the matrix element $\bar{A}_{ij}$ can be stored (as a bit-string with sufficient precision so that errors can be neglected). Note that the transformations \eqref{eq:prepA1}-\eqref{eq:prepA2} are essentially only used to build an efficient quantum circuit $C_A$~\cite[Lemma 48]{gilyen2018QSingValTransfArxiv}, which implements a unitary $U$, such that the top left $N\times M$ corner of $U$ equals $A$ -- such a unitary is called a block-encoding of $A$~\cite{chakraborty2018BlockMatrixPowers,gilyen2018QSingValTransf}. Similarly, the sparsity assumption for $b$ is only used in order to build an efficient quantum circuit $C_b$ for preparing the quantum state $\ket{b}:=\left(\sum_{i=1}^Nb_i\ket{i}\right)/\nrm{\sum_{i=1}^Nb_i\ket{i}}$. Thereby, the QLSP problem can also be efficiently solved for non-sparse $b$ or $A$, whenever we can build efficient quantum circuits $C_b$, $C_A$. For example, if we have both $b$ and $A$ stored in a quantum random access memory (qRAM) using an appropriate data structure, then we can build the efficient quantum circuits $C_b$, $C_A$ as described in \cite{kerenidis2016QRecSys,chakraborty2018BlockMatrixPowers,gilyen2018QSingValTransf}.

For completeness, let us mention two additional types of quantum algorithms that have been proposed for solving polynomial systems:
\begin{enumerate}
\item QAOA : In 2002, Burges \cite{burges2002factoring} reformulated RSA factoring problem as a polynomial system solving problem that has a unique solution, which is a special case of Problem~\ref{prob:MQC}. Later, Anschuetz, Olson, Aspuru-Guzik and Cao \cite{anschuetz2018variational} reformulated the polynomial system solving problem as a Local Hamiltonian Problem (LHP) that has a corresponding unique ground state, and they applied the QAOA
algorithm for finding this ground state; the exact complexity of this algorithm is unknown. 

\item Grover : In 2017, Faug\`ere, Horan, Kahrobaei, Kaplan, Kashefi, and Perret \cite{faugere2017fast} presented a quantum version of the classical algorithm in \cite{bardet2013complexity}, that applies Grover search on both the exhaustive search and the consistency check subroutines. Under certain assumptions, the running time of this quantum algorithm is slightly better than $\bigO{2^{n/2}}$, which is the running time of trivial Grover search algorithm. Similarly, Bernstein and Yang \cite{bernstein2018asymptotically} gave a quantum
algorithm (GroverXL) for random polynomial systems over a finite
field $\mathbb{F}_q$.\footnote{The generalization of the Boolean Macaulay matrix method in
\cite{bardet2013complexity} is equivalent to the reduced XL approach that first appeared in \cite{courtois2000efficient} and then defined in \cite{diem2004xl}.}
 
\end{enumerate}
While QAOA is only a heuristic algorithm and Grover search only represents a quadratic speedup, 
the QLS algorithm -- being $\mathsf{BQP}$-complete -- captures the power of quantum computation and
promises rigorous superpolynomial speedups, giving strong motivation for the study of the Macaulay matrix approach.

\section{Reducing polynomial system solving over a finite field \texorpdfstring{$\FF_q$}{F{\scriptsize 2}} to polynomial system solving over \texorpdfstring{$\mathbb{C}$}{C}}\label{sct:reductionToQLPS}

First, we present the $\FF_q=\FF_2$ special case of Chen, Gao and Yuan's approach~\cite{ChenGao2018}.
In this special case there is a bijection between the solution sets of the corresponding polynomial systems over $\FF_2$ and $\CC$.

\begin{problem} {Solve a system of $n$-variate quadratic polynomials with
		Boolean variables over
		$\mathbb{F}_2$.} \label{prob:MQ2} \\
	\indent
	\textbf{Input :}
	$\calF=\{f_1,\dots, f_m\} \subseteq \mathbb{F}_2[x_1,\dots,x_n]$ with
	$\deg(f_i) \leq 2$ for 
	$i=1,2,\dots, m$.\\
	\indent	\textbf{Output :} Find a solution $s \in
	\mathbb{F}_2^n$ such that $f_1 (s)  = \cdots = f_m
	(s) = 0$, when one exists.
\end{problem}

\begin{problem} {Solve a system of $n$-variate quadratic polynomials
		over $\mathbb{C}$, together with the $\FF_2$ field equations that force variables to be
		Boolean.} \label{prob:MQC}\\
	\indent \textbf{Input :}
	$\calF  
 \subseteq \mathbb{C}[x_1,
	\dots,x_n]$ where $\calF=\{{ f_1}, \dots, { f}_m\} $ with
	$\deg(f_i) = 2$ for $i=1,\dots, m$, 
	\indent \textbf{Output :} Find an $s \in \{0,1\}^n$ such that
	$ { f_1} (s) = \cdots= { f}_m(s) = 0$ over $\CC$, when one exists.
\end{problem}

Let $\sparsity{\calF}$ denote the maximum number of nonzero terms in any polynomial in $\calF$. Also let $\sparsity{f}$ be a shorthand for $\sparsity{\{f\}}$.

\begin{lemma}\label{lem: polyreduction}
	There is a polynomial-time reduction from Problem~\ref{prob:MQ2} on
	$n$ variables and a set of $m$ equations $\calF$ to Problem~\ref{prob:MQC} on
	$n+m\cdot \lfloor \log_2 \sparsity{\calF} \rfloor $ variables and
	$n+m\cdot (\lfloor \log_2 \sparsity{\calF}\rfloor + 1)$ equations.
\end{lemma}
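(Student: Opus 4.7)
The plan is to binary-encode the integer $f_i(s)/2$ with fresh Boolean variables, exploiting the following elementary observation: for a polynomial $f \in \FF_2[x_1,\ldots,x_n]$ with $\sparsity{f}$ nonzero terms, if we lift $f$ to a $\CC$-polynomial with its natural $\{0,1\}$-coefficients, then for any Boolean $s \in \{0,1\}^n$ we have $f(s) \in \{0,1,\ldots,\sparsity{f}\}$ because every monomial is either $0$ or $1$ at $s$; hence ``$f(s) = 0$ over $\FF_2$'' is equivalent to ``$f(s) \in 2\Z_{\geq 0}$ over $\CC$.'' This lets us certify evenness of each $f_i$ by an auxiliary binary representation.

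Concretely, for each $i \in \{1,\ldots,m\}$ I would introduce $k := \lfloor \log_2 \sparsity{\calF}\rfloor$ new variables $y_{i,1},\ldots,y_{i,k}$ and form the $\CC$-polynomial $g_i := f_i - \sum_{j=1}^{k} 2^j y_{i,j}$, which remains of degree at most $2$ since the correction is linear in the new variables. Set $\calF_1 := \{g_1,\ldots,g_m\}$ and let $\calF_2$ be the set of Boolean field equations $\{x_i^2 - x_i\}_{i=1}^n \cup \{y_{i,j}^2 - y_{i,j}\}_{i,j}$ over all $n + mk$ variables. This yields an instance of Problem~\ref{prob:MQC} with exactly $n + m\lfloor \log_2 \sparsity{\calF}\rfloor$ variables and $m + n + mk = n + m(\lfloor \log_2 \sparsity{\calF}\rfloor + 1)$ equations, matching the claim; the transformation is clearly polynomial-time computable.

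For correctness I would establish a bijection between solution sets. In the forward direction, if $s \in \FF_2^n$ solves the original system, then $c_i := f_i(s)/2$ is a nonnegative integer bounded by $\lfloor \sparsity{\calF}/2\rfloor$, and I would simply set $(y_{i,1},\ldots,y_{i,k})$ to the binary digits of $c_i$ to produce a Boolean solution of the new system. In the backward direction, given any $(s,y) \in \{0,1\}^{n+mk}$ solving the new system, reducing the integer identity $g_i(s,y) = 0$ modulo $2$ immediately yields $f_i(s) \equiv 0 \pmod 2$, i.e., $s$ is a solution of the original $\FF_2$-system.

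The only substantive technical step—and the one I would be most careful about—is verifying that $k = \lfloor \log_2 \sparsity{\calF}\rfloor$ bits suffice to encode every possible $c_i$, i.e., that $\lfloor \sparsity{\calF}/2\rfloor < 2^{\lfloor \log_2 \sparsity{\calF}\rfloor}$. This is elementary: by definition of $\lfloor \log_2 \cdot\rfloor$ we have $\sparsity{\calF} < 2^{\lfloor \log_2 \sparsity{\calF}\rfloor + 1} = 2 \cdot 2^{\lfloor \log_2 \sparsity{\calF}\rfloor}$, so $\lfloor \sparsity{\calF}/2\rfloor \leq \sparsity{\calF}/2 < 2^{\lfloor \log_2 \sparsity{\calF}\rfloor}$. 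Apart from this bit-count check, there is no algebraic or combinatorial obstacle and the remainder of the argument is pure bookkeeping.
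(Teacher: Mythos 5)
Your proposal is correct and follows essentially the same construction as the paper: introduce $\lfloor\log_2\sparsity{\calF}\rfloor$ fresh Boolean variables $y_{i,b}$ per input equation, replace $f_i$ by $f_i - \sum_{b\geq 1} 2^b y_{i,b}$ (a sum starting at $b=1$ so it is automatically even), add the field equations, and argue the bijection by evaluating $f_i(s)$ as an integer in $[0,\sparsity{f_i}]$. The paper presents the same idea with a bit more scaffolding (first introducing an auxiliary integer $z_i$ with the constraint $z_i/2\in\mathbb{Z}$, then expanding $z_i$ in binary), and your bit-count sanity check fills in a detail the paper leaves implicit, but the substance of the two arguments is identical.
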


\begin{proof}
	
	Solving Problem~\ref{prob:MQ2} is equivalent to solving the following
	polynomial system in variables $x_1,\ldots,x_n,$ $z_1,\ldots, z_m$ over
	$\mathbb{C}$:
	\begin{align}
	&\forall i\in [m]\colon &  f_i(x_1,\dots,x_n)  -z_i & = 0, \label{eq:presys}	\\
	&\forall i\in [m]\colon &  z_i/2 & \in \mathbb{Z},  \label{eq:even}\\
	&\forall j\in [n]\colon &  x_j^2 - x_j & = 0. \label{eq:field}
	\end{align}

The field equations~\eqref{eq:field} force each $x_j$ to be 0 or 1, and therefore each $f_i$ evaluates to an even or odd integer. Also, it is easy to see that each integer $z_i$ in equation~\eqref{eq:even} is in $ [0, \sparsity{f_i}]$ for $i=1,\dots m.$ and can be treated as a polynomial.
 Equations~\eqref{eq:presys}-\eqref{eq:even} force $f_i$ to evaluate to an even integer, so it is 0 mod 2.

	Chen, Gao and Yuan~\cite{ChenGao2018} then represent each $z_1,\ldots, z_m$ by the bits in its binary expansion. 
	For each variable $z_i\in [0, \sparsity{f_i}]$ a polynomial is introduced with Boolean variables $y_{ib}$ to represent its value in binary, i.e.,
	$
	z_i = \sum_{b=1}^{\floor{\log_2\sparsity f_i}} 2^{b} y_{ib}, \text{
		and }  y_{ib}^2 - y_{ib} = 0  \text{ for }  b=1,\dots, \floor{\log_2\sparsity{f_i}}.
	$
	
	Substituting the polynomials and Boolean constraints corresponding to each $z_i$ into the polynomial
	equations~\eqref{eq:presys}-\eqref{eq:field}, we get a following polynomial	system $\calF$ over $\mathbb{C}$:
	\begin{align}
		&\forall i\in [m]\colon   f_i(x_1,\dots,x_n) \textstyle -\sum_{b=1}^{\floor{\log_2\sparsity{\calF}}} 2^{b} y_{ib}  = 0, \label{eq:finalsys}	\\
		&\forall i\in [m]\forall b\in [\floor{\log_2\sparsity{\calF}}]\colon   y_{ib}^2 - y_{ib}  = 0, \label{eq:fieldfin1}\\
		&\forall j\in [n]\colon   x_j^2 - x_j  = 0. \label{eq:fieldfin2}
	\end{align}
	
	It is easy to see that there is a bijection between the set of solutions of $\calF \subseteq \mathbb{F}_2[x_1,x_2\dots,x_n]$ and the set of solutions of~\eqref{eq:finalsys}-\eqref{eq:fieldfin2} over $\CC$. 
	On one hand, given a solution $(s_1,\ldots,s_n)$ of $\calF \subseteq \mathbb{F}_2[x_1,x_2\dots,x_n]$, 
	evaluating $f_i(s_1,\ldots,s_n)$ over $\CC$ gives an even number $z_i$,
	and its binary expansion gives the values of the $y_{ib}$ variables.	
	On the other hand, let $(s_j),(t_{ib})$ be a solution
	to~\eqref{eq:finalsys}-\eqref{eq:fieldfin2}.  
	For each $j$, $s_j \in \{0,1\}$	by \eqref{eq:fieldfin2}, and for each $i$, $f_i(s_1,\ldots,s_n) =\sum_{b=1}^{\floor{\log_2\sparsity{\calF}}} 2^{b} y_{ib} $ by~\eqref{eq:finalsys}. Due to \eqref{eq:fieldfin1} this implies that $f_i(s_1,\ldots,s_n)\equiv 0\mod 2$, so $(s_j)$ is a solution of $\calF$.
\end{proof}

Given a set of polynomials $\calF \subseteq \mathbb{F}_q[X]$, Chen,
Gao and Yuan \cite{ChenGao2018} propose an analogous approach for
reducing the polynomial system $\calF$ to a polynomial system
$\calF_\CC = 0$, where $\calF_\CC \subseteq \CC[X,Y]$, in the form of
\cref{prob:MQC}.  However, in general even if $\calF$ has a unique solution,
the constructed polynomial systems $\calF_\mathbb{C}$ may have
multiple solutions.  

Now we present two additional reduction steps that make the polynomial systems in \cref{prob:MQC} easier to handle:
\begin{enumerate}[label=Red\arabic*:\!, ref=Red\arabic*]
\item\label{it:redVV} In order to ensure that the polynomial system $\calF \subseteq \mathbb{C}[X]$ in \cref{prob:MQC} has no more than one solution, we employ the Valiant-Vazirani affine hashing method \cite{valiant1986NPEasyAsDetectingUniqueSols}. Suppose that the polynomial system $\calF$ has $S\in[2^n]$ different solutions. The main idea of the affine hashing method is the following~\cite{bjorklund2019solving}: if one introduces $\floor{\log_2(S)}+2$ random linear equations $\calF_{R}$ with $\calF_{R} \subseteq \mathbb{F}_2[X]$, then they isolate a unique solution with probability at least $\frac{1}{8}$. Even if we don't know the number of solutions a priori, we can loop over all possible values of $\floor{\log_2(S)}\in\{0,1,\ldots,n\}$; making $\bigO{\ln(1/\eps)}$ trials for all possible choices of $\floor{\log_2(S)}$ gives at least one system $\calF_{R} = 0$ with a unique solution with probability at least $1-\eps$. This amounts to $\bigO{n\log(1/\eps)}$ different polynomial systems to check. 
Remember that $\calF_{R} \subseteq \FF_2[X]$ whereas $\calF \subseteq \mathbb{C}[X]$.	By Lemma ~\ref{lem: polyreduction}, we can reduce the new polynomials $\calF_{R}$ to polynomials $\calF_{R\CC} \subseteq \mathbb{C}[X,Y]$, where $Y$ is the set of new variables introduced during the reduction. Finally, if $\calF_{R}$ isolated a unique solution of $\calF$, then the polynomial system $\calF_{R\CC}\cup \calF$ has a unique solution. Thus, without loss of generality, we can always assume that \cref{prob:MQC} has a unique solution. 
	
\item\label{it:redN} Any polynomial system	$\calF=\{f_1, f_2 ,\dots, f_m \} \subseteq \C[X]$ can be rewritten as $\calF'= \{f_1', f_2' ,\dots, f_m' \}$, where $f_1'$ has constant term $-1$, while $f_2',\dots, f_m'$ have no constant terms.
	In case no polynomial in $\calF$ has a constant term, the all-zero vector is a trivial solution. Otherwise, let $c_i$ denote the constant term of $f_i$, and let us assume without loss of generality that $c_1\neq 0$. Then we can simply set $f_1':=-f_1/c_1$, and $f_i':=f_i+c_i f_1'$ for all $i\in\{2,3,\ldots,m\}$.
\end{enumerate}

The above two reductions increase the parameters considered in this paper only moderately. Indeed, \ref{it:redVV} introduces at most $\bigO{n \log(n)}$ new equations and variables, while \ref{it:redN} only affects the number of nonzero terms in the polynomial system. Moreover, \ref{it:redN} increases $\sparsity{\calF}$ and the total number of nonzero terms $\sum \sparsity{f_i}$ by at most a factor of $2$ (for the latter we shall choose $f_1$ to be the polynomial with a nonzero constant term that also has the fewest nonzero coefficients).

\newcommand{\mlmon}{m}
\newcommand{\mlmonb}{m'}
\newcommand{\mlmonc}{m''}
\newcommand{\mlmond}{t}
\newcommand{\mon}{\hat{m}}
\newcommand{\monb}{\hat{m}'}

\section{Macaulay linear systems and their \tqlscn}

In this section we define the Macaulay linear system of a set of
polynomials $\calF \subseteq \CC[x_1,\ldots,x_n]$ and show that 
when $\calF$ has a unique solution, the
condition number of the matrix is $\Omega(\sqrt{\binom{n}{h}})$, where 
$h$ is the Hamming weight of
the solution.  We show that the lower
bound also holds when using max degree instead of total degree in the
definition, as was done in~\cite{ChenGao2017}, showing that their
proposed quantum algorithm for solving polynomial equations by using
the QLS algorithm to solve a Macaulay linear system in general takes time
$\Omega((3n)^{h/2})$. We also show that if there are $t$ different solutions, 
but they have the same Hamming weight $h$, then the above lower bound 
can reduce by at most a factor of $\sqrt{t}$. Finally, we give a formula 
that can be used for giving a lower bound on the condition number
for any number of solutions and present computational evidence that 
this analytical lower bound is exponentially large in terms of the smallest 
Hamming weight among the solutions.

\subsection{Macaulay linear systems}

There is a well-known approach for solving polynomial systems by linearizing them with the help of introducing new latent auxiliary variables. The advantage of this approach is that the problem becomes linear, but the downside is
that the new problem is exponentially large.  The matrix of the resulting linear system is
called the Macaulay matrix.

\begin{definition}\label{def:MacaulayMat}
  The {\em Macaulay matrix $\macmat$ of degree $d$} of
  $\calF=\{f_1,\ldots,f_m\}\subseteq \CC[X]$ is the matrix where each
  row is labeled by a pair of polynomials $(\mon,f)$ and contains
  the corresponding coefficient vector of the polynomial $\mon f$.
  The rows range over all $f\in \calF$ and monomials $\mon$ such
  that $\mon f $ has degree at most $d$.  The columns are labeled by
  the set of monomials in $x_1,\ldots, x_n$ of degree at most $d$ and
  are ordered with respect to a specified monomial ordering.  The
  element in the row corresponding to $(\mon,f)$ and the column
  corresponding to the monomial $\monb$ is the coefficient of $\monb$
  in the polynomial $\mon f$.
\end{definition}

In the above definition one can interpret the degree as either the total degree 
or the max degree (the maximum degree of any variable) of multivariate polynomials 
resulting in different notions of the Macaulay matrix. When it is necessary 
we will always clarify which definition is being used. For example,
\cite{ChenGao2017} uses the max degree, so all references to that paper refer to the max degree version of this definition. 

 In the classical setting, the goal is to compute the Gr\"obner basis from the Macaulay matrix, where the Gr\"obner basis is a set of polynomials $\G=\{g_1, g_2,\dots, g_r\}$ such that for the leading term of any polynomial $f$ in the ideal $I=(\calF) $, there exists a polynomial $g_k \in \G$ such that $LM(g_k)|LM(f)$ \footnote{ Here $LM(f)$ is the leading term of the polynomial $f$}. Note that the size of the Macaulay matrix depends on the selected degree. For a set of $m$ quadratic polynomials with $n$ variables, the degree is approximately lower bounded by $\frac{n}{\sqrt{m}}$ \cite{courtois2000efficient}. When $m=\alpha n$, the degree is upper bounded by $c_{\alpha}n$ for some constant $c_{\alpha}$ \cite{bardet2013complexity}. In the quantum setting, the goal is to compute the monomials up to a certain degree. In this paper, we only provide the upper bound of the degree  that applies to any quadratic polynomials. It might be interesting to check the degree of some special polynomial systems.
 
Row operations on the matrix $\macmat$ correspond to
polynomial addition, subtraction, and scalar multiplication in the
polynomial ideal $\left\langle \calF\right\rangle$
\cite{batselier2013numerical,buchberger2018grobner}, and these
operations preserve the common roots of the system.  Classically, Gaussian
elimination can be performed on this matrix, and the entries can be
read out from the row-reduced matrix~\cite{diem2004xl,courtois2000efficient}. 
However, in the quantum case, we cannot directly do Gaussian elimination on this matrix and 
look at the row-reduced matrix. Instead, Chen and Gao showed that the QLS
algorithm can be used for sampling from nonzero solutions of the
following related linear system.

\begin{definition}\label{def:Macaulay}
  Let $\macmat$ be the Macaulay matrix of a given polynomial system,
  with the last column~$-\bvec$ corresponding to the constant terms of the
  polynomials.  Let $\macmat = [ \lemacmat |\ -\bvec\ ]$.  Then the
  equation
$ \lemacmat \svec = \bvec$
is called the {\em Macaulay linear system}.
\end{definition}

In other words, the Macaulay matrix is the augmented matrix from the
Macaulay linear system $\lemacmat \svec=\bvec$. 
Due to reduction \ref{it:redN} in the previous section, it can be assumed that exactly one of the
input polynomials has a nonzero constant term.
So we may assume without loss of generality that $\bvec = [1 \quad \textbf{0}]^T$, where the 
vector $\bvec$ corresponds to the column vector indexed by the degree~$0$~monomial~$1$.  

Chen and Gao's algorithm applies the QLS
algorithm to output the quantum state $\ket{\hat{y}}$ that can be
measured in order to sample from monomials with nonzero value in a valid assignment corresponding to a solution. We will
lower bound the condition number of $\lemacmat$, which will in turn
lower bound the running time of the proposed algorithm. 

Chen and Gao proposed to set the max degree to $3n$ in the Macaulay linear system, and they showed with this choice if a set of polynomials $\calF$ has a unique solution, then the linear system also has a unique solution~\cite[Lemma 4.1]{ChenGao2017}.
The output state $\ket{\hat{y}}$ of the QLS algorithm then corresponds to this unique solution of the linear system, and the solution of $\calF$ can be efficiently obtained from measuring the state $\ket{ \hat{y} }$. 

Classically, the solving degree of $\calF$ is used, which is at most $n+2$ as shown by Caminata and Gorla~\cite[Theorem 3.26]{caminata2017solving}.
This allows computing the Gr\"obner basis of the polynomial ideal $\left\langle \calF\right\rangle$ via Gaussian elimination of the linear system, and the solution of $\calF$ can be obtained from the Gr\"obner basis.
However, for some polynomial systems, the affine subspace of all solutions of the linear system has no well-understood structure even though $\calF$ has a unique solution. In this case, the QLS algorithm outputs a state $\ket{\hat{y}}$ that corresponds to the smallest $\ell_2$-norm solution of the linear system. In general we don't know how to extract the solution of $\calF$ from such states $\ket{\hat{y}}$.  

When $\calF$ has more than one solution and the max degree is set to be $3n$, the dimension of the affine subspace of all solutions of the linear system equals the number of solutions of $\calF$ minus 1. 
For each solution $\s \in \{0,1\}^n$ of $\calF$, there is a corresponding solution $\hat{y}_{\s}$ of the linear system. Those solutions $\hat{y}_{\s}$ are linearly independent and any solution of the linear system is an affine combination of the solutions $\hat{y}_{\s}$~\cite[Lemma 3.18]{ChenGao2017}. Again, the QLS algorithm outputs a state $\ket{ \hat{y}}$ corresponding to the smallest $\ell_2$ norm vector $\hat{y}$ in this affine subspace.
 
Chen and Gao~\cite{ChenGao2017} showed that $\lemacmat$ is an $\bigO{m\cdot\sparsity{\calF}}$-sparse, row computable matrix and $\bvec$ can be efficiently prepared as a quantum state. 
In particular, assuming $|\calF|=\bigO{\mathrm{poly}(n)}$, they show that the QLS algorithm can be run in time $\bigOt{\mathrm{poly}(n)\kappa(\lemacmat)}$.  
There is strong complexity theoretic evidence that in general running the QLS algorithm requires time $\widetilde{\Omega}(\kappa(\lemacmat))$~\cite{harrow2009QLinSysSolver}, so a lower bound on the condition number also lower bounds the running time.

\subsection{Lower bound on the truncated QLS condition number \texorpdfstring{$\kappa_{\bvec}(\lemacmat)$}{k{\scriptsize b}(\lemacmat)}} 

\newcommand{\onevec}{ \pmb{1}}
\newcommand{\covec}{c}
\newcommand{\dualvalue}{\gamma}

In this section we give a lower bound on the \tqlscn $\kappa_{\bvec}(\lemacmat)$. Since $\kappa_{\bvec}(\lemacmat) \leq \kappa(\lemacmat)$, 
this also implies a lower bound on the time complexity of Chen and Gao's~\cite{ChenGao2017} algorithm. 

In order to prove a lower bound on $\kappa_{\bvec}(\lemacmat)$, it suffices to lower bound the length of the solution vector $\svec = \lemacmat^+ \bvec$, since
	\begin{equation}\label{eq:condStrat}
\kappa_{\bvec}(\lemacmat)=  \nrm{\lemacmat}\frac{\nrm{\lemacmat^+
		\bvec}}{\nrm{\bvec}} \geq \nrm{\lemacmat^+ \bvec} =
\nrm{\svec}.
\end{equation}
Here, the first equality is the definition of $\kappa_{\bvec}$, and the inequality follows from $\nrm{\bvec}=1$, and because
$\nrm{\lemacmat}\geq 1$, as $\lemacmat$ has at least one matrix element which has absolute value at least $1$.\footnote{
For the Macaulay matrix construction,  let $f$ be $x^2-x$, the row $(1,f)$ will have a matrix element of magnitude $1$.} 

In order to understand the length $\nrm{\svec}$  of the solution vector $\svec = \lemacmat^+ \bvec$, let us first study the monomial solution vector $\svec^{(a)}$ corresponding to a binary solution $a$.
For a degree-$d$ Macaulay linear system, a monomial exponent $0\neq e\in {\mathbb N}^n$ is a ``valid'' coordinate of $\svec^{(a)}$ if $e\in\{0,1,\ldots,d\}^n$ (and $\sum e_i \leq d$ for total degree), 
moreover the solution vector satisfies $\svec^{(a)}_e = \Pi_i \s_i^{e_i}$, 
which is $1$ if and only if $\s_i=1$ for all variables $x_i$ in the monomial $\Pi_i x_i^{e_i}$ indexed by $e$. 
If the Hamming weight of $a$ is $h$ and $\lemacmat$ is constructed with max degree, 
then the number of such non-zero coordinates (monomials) is $(d+1)^{h}-1$, thus 
\begin{equation}\label{eq:vecMax}
\nrm{\svec^{(a)}}^2=(d+1)^{h}-1.
\end{equation}
When $\lemacmat$ is constructed with total degree, the number of such non-zero coordinates (monomials) is $\binom{d+h}{h}-1$, so
\begin{equation}\label{eq:vecTot}
\nrm{\svec^{(a)}}^2=\binom{d+h}{h}-1.
\end{equation}

Suppose $\s_1,\s_2,\cdots,\s_t \in \{0,1\}^n$ are the $t$ solutions of
$\calF=~{\calF}_1 \cup~\calF_2$, where $\calF_1=\{{ f_1}, \dots, { f}_m\} $  and
$\calF_2=\{x_1^2-x_1,\ldots, x_n^2-x_n\}$. 
The $t$ solutions $\s_1,\s_2,\cdots,\s_t $ of $\calF$ must be nonzero because the first equation has constant term $b_1=1$.
Let $\svec_1,\svec_2,\cdots,\svec_t$ \footnote{Note that the monomial solution vector corresponding to a binary solution $a_i$ is $\svec^{a_i}$.  Here and in the following discussion of multiple solutions case,  we write $\svec^{a_i}$ as $\svec_i$ for simplicity.
 } be the 0/1 solution vectors of the linear system $\lemacmat \svec = \bvec$ 
under the assignments $\s_1,\s_2,\cdots,\s_t$ respectively. 
When the max degree of the  linear system is set to be $3n$, the affine subspace of all solutions of the linear system is spanned by the monomial solution vectors $\svec_1,\svec_2,\cdots,\svec_t$~\cite[Theorem 3.21 and Lemma 4.1]{ChenGao2017}, but this property might also hold for lower degrees. From now on we assume that degree $d$ is such that the linear system has this property, then $\svec = \lemacmat^{+}\bvec$ has the minimum $\ell_2$ norm in the affine subspace spanned by  $\svec_1,\svec_2,\cdots,\svec_t$.

If all the $t$ solutions $\s_1,\s_2,\cdots,\s_t \in \{0,1\}^n$ have the same Hamming-weight, we can lower bound the length $\nrm{\svec}$ of the  solution vector $\svec=\lemacmat^{+}\bvec$ by the following lemma. 
\begin{lemma}
	\label{lem:LowerboundMFMultiple1}
	Suppose $\svec_1,\svec_2,\cdots,\svec_t$ are vectors with $0$ and $1$ entries such that their Hamming weights are equal. Then every vector in their (complex) affine hull $A$ has length ($\ell_2$ norm) at least $\nrm{\svec_1}/\sqrt{t}$.
\end{lemma}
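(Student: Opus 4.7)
The plan is to apply Cauchy--Schwarz against the all-ones test vector $\onevec$, exploiting that any affine combination of the $\svec_i$ is supported on their joint support, which has size at most $th$.

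First, I would parametrize a generic vector in the affine hull as $\svec = \sum_{i=1}^{t} \lambda_i \svec_i$ with $\sum_{i=1}^{t} \lambda_i = 1$ and $\lambda_i \in \CC$, and let $h$ denote the common Hamming weight, so that $\nrm{\svec_1} = \sqrt{h}$. The crucial observation is that every $\svec_i$ is 0/1 with exactly $h$ ones, so $\ip{\svec_i}{\onevec} = h$ for all $i$, and therefore
\begin{equation*}
\ip{\svec}{\onevec} = h\sum_{i=1}^{t} \lambda_i = h.
\end{equation*}

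Next, I would exploit that $\svec$ is supported on $U := \bigcup_{i=1}^{t} \mathrm{supp}(\svec_i)$, whose size is at most $\sum_{i=1}^{t} h = th$ by the union bound. Letting $\onevec_U$ be the indicator vector of $U$, we have $\ip{\svec}{\onevec} = \ip{\svec}{\onevec_U}$, and Cauchy--Schwarz gives
\begin{equation*}
h = \abs{\ip{\svec}{\onevec_U}} \leq \nrm{\svec}\cdot\nrm{\onevec_U} \leq \nrm{\svec}\cdot\sqrt{th},
\end{equation*}
which rearranges to $\nrm{\svec} \geq \sqrt{h/t} = \nrm{\svec_1}/\sqrt{t}$, as desired.

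I do not expect any serious obstacle: the equal-Hamming-weight hypothesis is used in exactly one spot -- to make $\ip{\svec_i}{\onevec}$ a common constant -- which is precisely what turns the affine constraint $\sum_i \lambda_i = 1$ into an equality on $\ip{\svec}{\onevec}$. If one drops that hypothesis, this single-shot Cauchy--Schwarz has to be replaced by a more delicate weighted test vector tailored to the supports, which is presumably what the paper's subsequent general formula handles.
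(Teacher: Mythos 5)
Your proof is correct and is essentially the same argument as the paper's: both use the all-ones test vector to pin down $\ip{\svec}{\onevec}=h$ from the affine constraint and the equal-Hamming-weight hypothesis, bound the support of $\svec$ by $th$, and close with a single Cauchy--Schwarz (the paper phrases it as $\nrm{\svec}_1\leq\nrm{\svec}_2\sqrt{\nrm{\svec}_0}$, you phrase it as $\abs{\ip{\svec}{\onevec_U}}\leq\nrm{\svec}\nrm{\onevec_U}$, which is the same inequality applied to the indicator of the joint support rather than of $\mathrm{supp}(\svec)$).
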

\begin{proof}
	Every entry of the vectors $\svec_i$ is either $0$ or $1$, therefore $\ip{\svec_i}{\onevec}=\nrm{\svec_i}_1$ (we denote by $\onevec$ the all-$1$ vector). Since the vectors have the same Hamming weight we also have $\nrm{\svec_1}_1=\nrm{\svec_i}_1$ for all $i\in[t]$. Let $\svec$ be any vector in $A$, then $\ip{\svec}{\onevec}=\ip{\svec_1}{\onevec}$, and in particular $\nrm{\svec}_1\geq\ip{\svec}{\onevec}= \nrm{\svec_1}_1=\nrm{\svec_1}_2^2$.
	Let $\nrm{\svec}_0$ denote the support size of $\svec$. Then $\nrm{\svec}_0\leq \sum_{i=1}^{t}\nrm{\svec_i}_0=t\nrm{\svec_1}_0=t\nrm{\svec_1}_1=t\nrm{\svec_1}_2^2$. By the Cauchy-Schwarz inequality we have that 
	$ \nrm{\svec}_1\leq\nrm{\svec}_2\sqrt{\nrm{\svec}_0} \Longrightarrow \nrm{\svec}_2\geq \frac{\nrm{\svec}_1}{\sqrt{\nrm{\svec}_0}}\geq \frac{\nrm{\svec_1}_2^2}{\sqrt{t}\nrm{\svec_1}_2}=\frac{\nrm{\svec_1}_2}{\sqrt{t}}.\qedhere
	$
\end{proof}
 
If the minimum $\ell_2$-norm solution $\svec = \macmat^{+}\bvec$ happens to be a convex combination $\svec =\sum_{i=1}^{t} w_i \svec_i$ of the (possibly differing Hamming weight) solution vectors $\svec_1,\svec_2,\cdots,\svec_t$, 
then we can similarly lower bound the length $\nrm{\svec}$.
 
\begin{lemma}
	\label{lem:LowerboundMFMultiple2}
	Suppose $\svec_1,\svec_2,\cdots,\svec_t $ are $0/1$ vectors and $\svec_1$ has the minimum Hamming weight. Then every vector in their convex hull $A$ has length ($\ell_2$-norm) at least $\nrm{\svec_1}/\sqrt{t}$.
\end{lemma}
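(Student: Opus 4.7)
The plan is to argue directly from the expansion of $\nrm{\svec}_2^2$ in the basis $\svec_1,\ldots,\svec_t$, leveraging the fact that convex combinations (as opposed to the affine combinations in \cref{lem:LowerboundMFMultiple1}) force the coefficients to be non-negative. First I would observe that the $\ell_1/\ell_0$ Cauchy--Schwarz strategy used in the previous lemma does not transfer directly: the bound $\nrm{\svec}_0 \leq t\nrm{\svec_1}_0$ can fail badly when some $\svec_i$ with $i>1$ has much larger support than $\svec_1$ (e.g.\ $\svec_1=e_1$ while $\svec_2$ is the all-ones vector makes $\nrm{\svec}_0$ the full ambient dimension), so a different lever is needed.

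The key step is to write $\svec = \sum_{i=1}^t w_i \svec_i$ with $w_i\geq 0$ and $\sum_i w_i=1$, and expand
\[
\nrm{\svec}_2^2 \;=\; \sum_{i,j=1}^{t} w_i w_j \ip{\svec_i}{\svec_j}
\]
as a quadratic form in the Gram matrix of the $\svec_i$. The crucial observation is that this Gram matrix has non-negative entries (inner products of $0/1$ vectors count common $1$-positions), and by convexity the weights are also non-negative, so every off-diagonal contribution is non-negative and can simply be discarded. This gives the lower bound $\nrm{\svec}_2^2\geq \sum_i w_i^2 \nrm{\svec_i}_2^2$.

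From there the proof closes quickly: the minimum Hamming-weight hypothesis yields $\nrm{\svec_i}_2^2=\nrm{\svec_i}_0\geq\nrm{\svec_1}_0=\nrm{\svec_1}_2^2$ for every $i$, while Cauchy--Schwarz on the simplex gives $\sum_i w_i^2\geq (\sum_i w_i)^2/t=1/t$. Combining these two ingredients produces $\nrm{\svec}_2^2\geq \nrm{\svec_1}_2^2/t$, and taking square roots delivers the claim. The main conceptual hurdle is recognizing that convexity (rather than merely affine closure) is what makes the cross-terms automatically discardable; once that is identified, the equal-Hamming-weight restriction of \cref{lem:LowerboundMFMultiple1} can be dropped in favor of the much weaker minimum-Hamming-weight assumption without needing any control on the supports of the remaining $\svec_i$.
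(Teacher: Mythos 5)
Your proof is correct and follows essentially the same route as the paper's: expand $\nrm{\svec}^2$ as a quadratic form in the Gram matrix, discard the non-negative cross-terms (possible precisely because the $w_i$ and the inner products of $0/1$ vectors are non-negative), bound each diagonal term by $\nrm{\svec_1}^2$ via the minimum-Hamming-weight hypothesis, and apply Cauchy--Schwarz to get $\sum_i w_i^2 \geq 1/t$. Your opening remark on why the $\ell_1/\ell_0$ argument of \cref{lem:LowerboundMFMultiple1} does not transfer to the convex case is a nice observation that the paper leaves implicit.
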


\begin{proof}
	Let $\svec =\sum_{i=1}^{t} w_i \svec_i$ be an arbitrary vector in the convex hull $A$ generated by $\svec_1,\svec_2,\cdots,\svec_t $, where  $\sum_{i=1}^{t} w_i = 1$ and $w_i \geq 0$. Then
	$
	\nrm{\svec}^2 = \sum_{i=1}^{t}\sum_{j=1}^{t} w_i w_j \ip{\svec_i}{\svec_j}\geq \sum_{i=1}^{t} w_i^2 \ip{\svec_i}{\svec_i} \geq \sum_{i=1}^{t} w_i^2 \ip{\svec_1}{\svec_1} \geq \ip{\svec_1}{\svec_1}/t.
	$ 
	The first equality is by the definition of $\nrm{\svec}^2=\ip{\svec}{\svec}$. The first
	inequality is because $w_i w_j\ip{\svec_i}{\svec_j}\geq 0$ for any pair $ i,j\in[t]$. The second
	inequality is true because $\ip{\svec_i}{\svec_i} \geq \ip{\svec_1}{\svec_1}$ for any $i\in[t]$ as $\svec_i$ are a $0/1$ vectors and $\svec_1$ has the minimum Hamming weight. The third inequality follows from Cauchy-Schwarz.
\end{proof}
 
Combining \eqref{eq:condStrat} with \eqref{eq:vecMax}-\eqref{eq:vecTot}, \Cref{lem:LowerboundMFMultiple1} and \Cref{lem:LowerboundMFMultiple2}, we get our first lower bound result.

\begin{theorem}\label{thm:LowerboundMFMultiple}
	Suppose $\s_1,\s_2,\cdots,\s_t \in \{0,1\}^n$ are the $t$ solutions of
	$\calF=~{\calF}_1 \cup~\calF_2$, where $\calF_1=\{{ f_1}, \dots, { f}_m\} $  and
	$\calF_2=\{x_1^2-x_1,\ldots, x_n^2-x_n\}$, and let $h$ be the minimum Hamming weight of the $t$ solutions $\s_1,\s_2,\cdots,\s_t$.  Let $d$ be the selected degree on constructing the Macaulay linear system $\lemacmat \svec = \bvec$ and let $\svec_1,\svec_2,\cdots,\svec_t$ be the corresponding solution vectors of the Macaulay linear system 	$\lemacmat \svec = \bvec$ under the assignments $\s_1,\s_2,\cdots,\s_t$ respectively. 
	
	If all the $t$ solutions $\s_1,\s_2,\cdots,\s_t$ have the same Hamming weight $h$ or the minimum $\ell_2$-norm solution vector $\svec = \lemacmat^{+}\bvec$ is in the convex hull of $\svec_1,\svec_2,\cdots,\svec_t$, then the \tqlscn of $\lemacmat$ of $\calF$  in the Macaulay linear system   
 	\begin{itemize}
		\item using max degree is $\kappa_{\bvec}(\lemacmat) \geq \sqrt{\left((d+1)^{h}-1\right)/t}$, and
		\item using total degree is $\kappa_{\bvec}(\lemacmat) \geq \sqrt{\left(\binom{d+h}{h}-1\right)/t}$.
	\end{itemize}

 In particular in the setup in \cite{ChenGao2017}, using max degree $d= 3n$, we have $\kappa_{\bvec}(\lemacmat) \geq \sqrt{(3n)^{h}/t} $.  
	\end{theorem}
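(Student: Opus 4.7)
The plan is to chain together the three ingredients already established: the inequality \eqref{eq:condStrat} reducing the bound on $\kappa_{\bvec}(\lemacmat)$ to a bound on $\nrm{\svec}$; the norm computations \eqref{eq:vecMax} and \eqref{eq:vecTot} for the monomial solution vectors $\svec^{(a)}$; and whichever of \Cref{lem:LowerboundMFMultiple1} or \Cref{lem:LowerboundMFMultiple2} applies to the case at hand. First, I would recall from \eqref{eq:condStrat} that $\kappa_{\bvec}(\lemacmat)\geq\nrm{\svec}$, where $\svec=\lemacmat^{+}\bvec$ is the minimum $\ell_2$-norm element of the solution set. By the paragraph preceding the theorem, the assumption on the degree $d$ guarantees that the affine subspace of solutions of $\lemacmat \svec=\bvec$ is precisely the affine hull of the monomial solution vectors $\svec_1,\ldots,\svec_t$, so $\svec$ lies in this affine hull.

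Next, I would split into the two cases. In the equal-Hamming-weight case, each $\svec_i$ has the same number of nonzero entries (all computed from \eqref{eq:vecMax} or \eqref{eq:vecTot} with the common Hamming weight~$h$), so the hypothesis of \Cref{lem:LowerboundMFMultiple1} is satisfied, yielding $\nrm{\svec}\geq\nrm{\svec_1}/\sqrt{t}$. In the convex-hull case, the theorem explicitly assumes $\svec$ lies in the convex hull of $\svec_1,\ldots,\svec_t$; relabeling so that $\svec_1$ corresponds to a solution $\s_1$ of minimum Hamming weight $h$, \Cref{lem:LowerboundMFMultiple2} again gives $\nrm{\svec}\geq\nrm{\svec_1}/\sqrt{t}$. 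In both cases $\svec_1$ is the monomial solution vector of a Hamming-weight-$h$ solution, so \eqref{eq:vecMax} yields $\nrm{\svec_1}^2=(d+1)^{h}-1$ in the max-degree construction and \eqref{eq:vecTot} yields $\nrm{\svec_1}^2=\binom{d+h}{h}-1$ in the total-degree construction, giving the two claimed bullet-point bounds after dividing by $\sqrt{t}$.

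For the final specialization to the Chen--Gao setup with $d=3n$, I would simply note the elementary inequality $(3n+1)^{h}-1\geq (3n)^{h}$ for $h\geq 1$ (which follows from the binomial expansion of $(3n+1)^{h}$), yielding $\kappa_{\bvec}(\lemacmat)\geq\sqrt{(3n)^{h}/t}$. I do not anticipate any real obstacle: the substantive work has already been done in the two lemmas and in the combinatorial count of monomials, so the proof is essentially an assembly. The only point requiring care is justifying that $\svec$ belongs to the affine hull (as opposed to a larger affine subspace of ``spurious'' solutions)---this is precisely why the paragraph preceding the theorem makes the standing assumption that the chosen degree $d$ is large enough for the ``spanning'' property of \cite[Theorem 3.21 and Lemma 4.1]{ChenGao2017} to hold, and invoking that assumption cleanly closes the argument.
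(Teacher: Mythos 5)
Your proof is correct and is precisely the argument the paper intends: the authors compress it to a single sentence ("Combining \eqref{eq:condStrat} with \eqref{eq:vecMax}--\eqref{eq:vecTot}, \Cref{lem:LowerboundMFMultiple1} and \Cref{lem:LowerboundMFMultiple2}, we get our first lower bound result"), and your write-up simply fills in the same steps, including the elementary $(3n+1)^h - 1 \geq (3n)^h$ check for the final specialization.
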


Now we give a lower bound in terms of the smallest Hamming weight in the binary solution set. For this we use the following purely geometrical lemma.

\begin{lemma}[Shortest vector within an affine subspace]\label{lem:shortestVec}
	Let $V\in\mathbb{C}^{n\times k}$ be a matrix with columns $v_1,v_2,\ldots, v_k$, and let $A$ be their (complex) affine hull $A:=\{Vx: x\in\mathbb{C}^k \text{ s.t. }\ip{x}{\onevec}=1\}$. Then $A$ contains the origin iff the column space of the Gram matrix $G=V^\dagger V$ does not contain $\onevec$. Moreover, the length-square $\dualvalue^*$ of the shortest vector (with respect to the $\ell_2$-norm) in $A$ is
	\begin{equation}\label{eq:GramDual}
	\dualvalue^*=\max\{\dualvalue \colon G-\gamma \onevec\cdot\onevec^T\succeq 0\}.
	\end{equation}	
	Furthermore, if $A$ does not contain the origin, then $\dualvalue^*=1/\ip{\onevec}{G^+\onevec}$ and the shortest vector in $A$ is $V\cdot w$ for $w = G^{+} \onevec / \ip{\onevec}{G^+\onevec}$.
\end{lemma}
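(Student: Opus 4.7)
The plan is to recast the problem as the quadratic program $\dualvalue^* = \min\{x^\dagger G x : \onevec^T x = 1\}$, using $\nrm{Vx}^2 = x^\dagger V^\dagger V x = x^\dagger G x$, and then solve it via one Cauchy--Schwarz inequality. First, $\nrm{Vx}^2 = x^\dagger G x$ gives $\ker V = \ker G$, so the origin lies in $A$ iff some $x\in\ker G$ satisfies $\onevec^T x=1$, iff $\onevec$ is not orthogonal to $\ker G$. Since $G$ is Hermitian, $(\ker G)^\perp = \mathrm{col}(G)$, proving the first claim. In the degenerate case when $0\in A$ we have $\dualvalue^*=0$; decomposing $\onevec=\onevec_c+\onevec_\perp$ with $\onevec_c\in\mathrm{col}(G)$ and $0\neq\onevec_\perp\in\ker G$ shows that $\onevec_\perp^\dagger G\onevec_\perp=0$ while $|\onevec^T\onevec_\perp|^2=\nrm{\onevec_\perp}^4>0$, so no $\dualvalue>0$ is feasible in \eqref{eq:GramDual} and the identity holds trivially.

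Now assume $\onevec\in\mathrm{col}(G)$, equivalently $GG^+\onevec=\onevec$, using that $GG^+$ is the orthogonal projector onto $\mathrm{col}(G)$ for Hermitian $G$. The candidate minimizer is $w:=G^+\onevec/\ip{\onevec}{G^+\onevec}$, which is well-defined because $G^+\succeq 0$, $\onevec\in\mathrm{col}(G^+)=\mathrm{col}(G)$, and $\onevec\neq 0$ together force $\ip{\onevec}{G^+\onevec}>0$. Using the Moore--Penrose identities $G^+GG^+=G^+$ and $(G^+)^\dagger=G^+$, one directly checks $\onevec^T w=1$ and $w^\dagger G w=1/\ip{\onevec}{G^+\onevec}$, hence $\dualvalue^*\leq 1/\ip{\onevec}{G^+\onevec}$.

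The heart of the proof is the matching lower bound, which comes from one Cauchy--Schwarz step. For any $y\in\mathbb{C}^k$, writing $\onevec = GG^+\onevec$ and using Hermiticity of $G^+$ and $G$,
\[
\onevec^T y = (GG^+\onevec)^\dagger y = (G^+\onevec)^\dagger G y = \ip{VG^+\onevec}{Vy},
\]
so $|\onevec^T y|^2\leq \nrm{VG^+\onevec}^2\nrm{Vy}^2 = \ip{\onevec}{G^+\onevec}\cdot y^\dagger G y$, i.e., $G\succeq \onevec\onevec^T/\ip{\onevec}{G^+\onevec}$. This both establishes feasibility of $\dualvalue=1/\ip{\onevec}{G^+\onevec}$ in \eqref{eq:GramDual} and, specialized to $\onevec^T x=1$, yields $\dualvalue^*\geq 1/\ip{\onevec}{G^+\onevec}$. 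Combined with the upper bound from $w$, we get $\dualvalue^*=1/\ip{\onevec}{G^+\onevec}$ attained at $Vw$, as claimed. The remaining (trivial) direction of \eqref{eq:GramDual} is that $G\succeq\dualvalue\onevec\onevec^T$ and $\onevec^T x=1$ imply $x^\dagger Gx\geq\dualvalue|\onevec^T x|^2=\dualvalue$, so $\dualvalue\leq \dualvalue^*$.

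The main obstacle is purely notational: keeping track of conjugations and verifying $GG^+\onevec=\onevec$ under the hypothesis $\onevec\in\mathrm{col}(G)$. Once that is cleanly in place, the single Cauchy--Schwarz inequality delivers the explicit minimizer, its value, and the SDP reformulation \eqref{eq:GramDual} simultaneously, so the rest is algebraic bookkeeping.
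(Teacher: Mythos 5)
Your proof is correct, and it takes a genuinely different route from the paper. The paper proceeds geometrically: it first characterizes the shortest vector $s$ in $A$ by noting that $s$ must be orthogonal to every difference $v_i - v_j$, deduces $Gw \propto \onevec$ for $s = Vw$, uses a footnote to pin down $w$ uniquely, and then derives the SDP identity \eqref{eq:GramDual} by conjugating the scalar inequality $1 - \gamma/\dualvalue^* \geq 0$ with $\sqrt{G}$ and $\sqrt{G^+}$ and invoking that $\sqrt{G}\sqrt{G^+}$ is the projector onto $\mathrm{col}(G)$. You instead recast the problem as $\min\{x^\dagger G x : \onevec^T x = 1\}$, guess the minimizer $w = G^+\onevec/\ip{\onevec}{G^+\onevec}$, verify it satisfies the constraint and attains the value $1/\ip{\onevec}{G^+\onevec}$, and then prove the matching lower bound and the matrix inequality $G \succeq \onevec\onevec^T/\ip{\onevec}{G^+\onevec}$ in one stroke via the Cauchy--Schwarz inequality applied to $\onevec^T y = \ip{VG^+\onevec}{Vy}$. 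This buys several things: it avoids the square roots $\sqrt{G}$, $\sqrt{G^+}$ and the projector computation entirely; it eliminates the paper's footnote-level argument about uniqueness of $w$, since you never need to characterize the minimizer a priori; and it collapses the primal optimality, the SDP feasibility of $\dualvalue^*$, and the SDP upper bound into a single inequality. Your treatment of the degenerate case ($\onevec \notin \mathrm{col}(G)$) via the decomposition $\onevec = \onevec_c + \onevec_\perp$ is also slightly simpler than the paper's sandwiching by $(I - G^+G)$, though the two are morally the same computation. The argument is complete; the only stylistic suggestion is to state explicitly (as you implicitly use) that $\ker V = \ker G$ follows from $\nrm{Vx}^2 = x^\dagger G x$, and that the maximum in \eqref{eq:GramDual} is attained because $\{\dualvalue : G - \dualvalue\onevec\onevec^T \succeq 0\}$ is closed, nonempty (it contains $0$), and bounded above.
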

Note that the problem of finding the length of the shortest vector in an affine subspace can also be reformulated as the following SDP:
\begin{equation*}
\dualvalue^*=\min_{\rho \succeq 0} \tr(G \rho)
\text{  \qquad subject to: }   \tr(\onevec\cdot\onevec^T\rho) = 1,
\end{equation*}
where without loss of generality we can assume that the above optimizer $\rho$ has rank $1$. 

By the weak duality of SDPs, and utilizing the dual of the above problem, we get: 
\begin{equation*}
\dualvalue^*\geq\max_{\gamma\in \mathbb{R}} \dualvalue
\text{  \qquad subject to: } G- \dualvalue \onevec\cdot\onevec^T \succeq 0.
\end{equation*}
In fact the following proof of \cref{lem:shortestVec} shows that the above inequality is tight, i.e., strong duality always holds for this SDP.
\begin{proof}
	The shortest vector $s$ in the affine subspace is orthogonal to all vectors of the form $v_i-v_j$, therefore we must have that $\ip{s}{v_i}$ is constant for all $i\in [k]$, i.e., $V^\dagger s \propto \onevec$. Since $s$ is in the column space of $V$, we can write it in the form $s=V\cdot w$ so we get $V^\dagger s=V^\dagger V\cdot w = G w \propto \onevec$. If $s\neq 0$ then this implies that $\onevec$ is in the column space of $G$, consequently $0\neq\ip{\onevec}{G^+ \onevec}$ and thereby $s=V\cdot w$ for $w=G^+ \onevec/ \ip{\onevec}{G^+\onevec}\!$.%
	\footnote{For the last implication note that we already showed $w=\beta G^{+} \onevec + v$, where $v\in\ker(G)$. As $\ker(G)=\ker(V)$ we can assume without loss of generality that $v=0$. It is easy to see that $VG^{+} \onevec / \ip{\onevec}{G^+\onevec}\in A$, and since $A$ is an affine subspace not containing the origin, it can only contain one vector of the form $\beta V G^{+} \onevec\colon \beta \in \mathbb{C}$, thus $s=VG^{+} \onevec / \ip{\onevec}{G^+\onevec}$. (Indeed, if two distinct vectors $x, y$ are in $A$ and $y=\lambda x$, then $ \frac{1}{1-\lambda}y-\frac{\lambda}{1-\lambda} x=0$ is also in $A$.)} 
	From this we can conclude $\dualvalue^*=\nrm{s}^2=\ip{w}{Gw}=1/\ip{\onevec}{G^+\onevec}$.
	
	Conversely, if $\onevec$ is in the column space of $G$ then $s=V\cdot w$ for $w=G^+ \onevec/ \ip{\onevec}{G^+\onevec}\!$ is a non-zero vector, which is the shortest vector in $A$ due to the fact that it is orthogonal to all vectors of the form $v_i-v_j$. We can conclude that $\dualvalue^*\neq 0$ iff $\onevec$ is in the column space of $G$.
	
	If $\dualvalue^*> 0$, we can formulate a ``dual'' optimization problem
	the following way: $\dualvalue^* = \max\{\gamma \colon 1-\gamma/\dualvalue^* \geq 0\} = \max\{\gamma \colon 1-\gamma\ip{\pmb{1}}{G^{+}\pmb{1}} \geq 0\} = \max\{\gamma \colon I-\gamma\sqrt{G^{+}}\onevec\cdot\onevec^T\sqrt{G^{+}} \succeq 0\}$. By multiplying the matrices with $\sqrt{G}$ from both sides we get the following equivalent maximization formulation
	\begin{equation}\label{eq:GramDualInvolved}
	\dualvalue^* =\max\{\gamma \colon G-\gamma \sqrt{G}\sqrt{G^{+}}\onevec\cdot\onevec^T\sqrt{G^{+}}\sqrt{G}\succeq 0\}.
	\end{equation}
	Note that $\sqrt{G^+}\sqrt{G}=\sqrt{G}\sqrt{G^+}=(G^+ G)$ is the orthogonal projector to the column space of $G$. Since $\onevec$ is in the image of $G$ the above equation \eqref{eq:GramDualInvolved} is equivalent to \eqref{eq:GramDual}.
	
	On the other hand, if $\dualvalue^* = 0$, then $\onevec$ is not in the column space of $G$ and so $(I-G^+ G)\onevec\neq 0$. Observe that $G-\gamma \onevec\cdot\onevec^T\succeq 0$ implies $(I-G^+ G)G(I-G^+ G)-\gamma (I-G^+ G)\onevec\cdot\onevec^T(I-G^+ G)\succeq 0$ or equivalently $-\gamma (I-G^+ G)\onevec\cdot\onevec^T(I-G^+ G)\succeq 0$, which then only holds for $\gamma\leq 0=\dualvalue^*$. Consequently, \eqref{eq:GramDual} holds even in the case $\dualvalue^*=0$.
\end{proof}
 
Suppose that the Boolean solution set of the polynomial system is 
$ S=\{a_1, a_2, \cdots, a_t\}$. Let $A_S$ be the affine subspace
corresponding to the solution set $S$, spanned by the monomial solution vectors $y_1, y_2, \cdots, y_t$ of the linear system $\lemacmat \svec=\bvec$ corresponding to the Boolean solutions $a_1, a_2, \cdots, a_t$ respectively.
We wish to lower bound the length of the shortest vector in $A_S$; for this it suffices to find the length of the shortest vector in an enlarged affine subspace $A_{S'} \supseteq A_S$.  

Let $S'$ be the symmetrized solution set of $S$ by applying all possible permutations of the variables of $a_i$'s and taking their union. Let $A_{S'}$ be the affine subspace corresponding to the solution set $S'$ and let $v$ be 
the shortest vector in $A_{S'}$.
For each Hamming weight $h$ that appears in $S'$, there is a symmetrized monomial solution vector $\textbf{v}_h$. This $\textbf{v}_h$ equals
the average over all monomial solution vectors that are associated to Boolean solutions of Hamming weight $h$. 
 
Next we will argue that the minimum $\ell_2$-norm vector $v \in A_{S'}$ is an affine combination of the symmetrized monomial solution vectors $\textbf{v}_h$.  If we apply an induced permutation on the coordinates of $v$ according to a permutation of the Boolean variables, then the $\ell_2$-norm of the resulting vector $u \in A_{S'}$ is equal to the $\ell_2$-norm of  $v$. 
Because $v$ has the minimum $\ell_2$-norm in $A_{S'}$ we have $ \nrm{\frac{u+v}{2}}_2 \geq \nrm{v} $, and due to $ \nrm{u} =\nrm{v}$ by the triangle inequality $ \nrm{\frac{u+v}{2}}_2 \leq \nrm{v} $ (equality holds if and only if $u=v$),  the resulting vector $u$ is equal to $v$. Therefore, the shortest vector $v$ is invariant under all the possible induced permutations, therefore we can conclude that $v$ is an affine combination of the symmetrized monomial solution vectors $\textbf{v}_h$.

Now, we can lower bound $\nrm{\lemacmat_{\calF}^+ b}$ by finding the lowest $\ell_2$ norm of a vector in the affine subspace spanned by the symmetrized vectors $\textbf{v}_h$ corresponding to the Hamming weights $h$ that appear in $S$. This can be achieved by considering the Gram matrix as explained in \Cref{lem:shortestVec}.

In order to compute this Gram matrix, we need to understand the symmetrized vectors $\textbf{v}_h$. For this, let us introduce the following orthonormal vector system $(\textbf{b}_s)$, corresponding to the set of monomials $\mathfrak{m}^d_s$ that contain exactly $s$ variables with a non-zero exponent, with the degree of the monomials being at most $d$, then $\textbf{b}_s:=\frac{1}{\sqrt{|\mathfrak{m}^d_s|}}\sum_{m\in\mathfrak{m}^d_s} e_m$. 
Also let $\Pi_{\mathfrak{m}^d_s}=\sum_{m\in\mathfrak{m}^d_s} e_m \cdot e_m^T$ be the projector to coordinates in $\mathfrak{m}^d_s$. Finally, let $c^d_s$ be the number of monomials that contain $s$ specific variables with a non-zero exponent and have degree at most $d$, so that $|\mathfrak{m}^d_s|=\binom{n}{s}c^d_s$.

One can see that for any $a\in\{0,1\}^n$ of Hamming weight $h$ we have $c^d_s\binom{h}{s}=\ip{\onevec}{\Pi_{\mathfrak{m}^d_s}\svec^{(a)}}=\ip{\onevec}{\Pi_{\mathfrak{m}^d_s}\textbf{v}_h}$. Since $\textbf{v}_h$ has uniform coordinates over $\mathfrak{m}^d_s$ we have 
$
\nrm{\Pi_{\mathfrak{m}^d_s} \textbf{v}_h}^2
=|\mathfrak{m}^d_s|
\left(\frac{\ip{\onevec}{\Pi_{\mathfrak{m}^d_s}\textbf{v}_h}}{\|\mathfrak{m}^d_s\|}\right)^{\!\!\!2}
=c^d_s\binom{n}{s}\left(\frac{c^d_s\binom{h}{s}}{c^d_s\binom{n}{s}}\right)^{\!\!2}
=c^d_s\binom{h}{s}^{\!\!2}/\binom{n}{s},
$
consequently
\begin{equation}\label{eq:symmeticVectors}
\textbf{v}_h=\sum_{s=1}^h \sqrt{c^d_s \binom{h}{s}^{\!\!2}/\binom{n}{s}}\textbf{\textbf{b}}_s,
\end{equation}
and the Gram matrix $G$ of the symmetrized vectors has matrix elements
\begin{equation*}
G_{ij}=\ip{\textbf{v}_i}{\textbf{v}_j}=\sum_{s=1}^n c^d_s \binom{i}{s}\binom{j}{s}/\binom{n}{s}.
\end{equation*} 

Together with \Cref{lem:shortestVec} this enables us to give a lower bound on the smallest $\ell_2$-norm solution in terms of the minimal Hamming weight appearing in the solution set as follows.

\begin{theorem}
	Suppose that $\calF$ is a Boolean polynomial system with $n$ Boolean variables where each solution have Hamming weight at least $h$, and $d\geq n$.  Recall that $\tqlscn$, defined as  $\kappa_{\bvec}(\lemacmat) =\nrm{\lemacmat}\frac{\nrm{\lemacmat^+\bvec}}{\nrm{\bvec}}$, is also lower bounded by the smallest $\ell_2$-norm solution by equation~\eqref{eq:condStrat}.
 Then the degree-$d$ Macaulay linear system's \tqlscn is lower bounded by
	\begin{equation}\label{eq:GramThm}
	\frac1{\ip{\onevec}{(G^{(h)})^{-1}\onevec}} = \max\{\dualvalue \colon G^{(h)}-\gamma \onevec\cdot\onevec^T\succeq 0\},
	\end{equation}
	where $G\in\mathbb{R}^{n\times n}$ is the Gram matrix whose $(i,j)$ matrix element is 
	\begin{equation}\label{eq:symmeticVectorsGram}
	G_{ij}=\sum_{s=1}^n c^d_s \binom{i}{s}\binom{j}{s}/\binom{n}{s},
	\end{equation} 
	 $c^d_s=d^s$ for max degree, while $c^d_s=\binom{d}{s}$ for total degree, and finally $G^{(h)}$ is the bottom-right $(n-h+1)$ by $(n-h+1)$ minor of $G$.\footnote{When $d\geq n$, due to the triangular shape of the non-zero coefficients of the vectors in \eqref{eq:symmeticVectors} it is easy to see that $G$ has full rank, i.e., it is positive definite. It follows, that all principal submatrix of $G$ are also positive definite, i.e., have full rank, therefore $G^{(h)}$ is invertible.} 
\end{theorem}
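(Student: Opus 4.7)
The plan is to combine the bound $\kappa_{\bvec}(\lemacmat)\geq\nrm{\svec}$ from \eqref{eq:condStrat} with the geometric \Cref{lem:shortestVec}, exploiting the symmetrization argument laid out in the paragraphs preceding the theorem. Since only the minimum Hamming weight $h$ of the solutions is known, I would first enlarge the actual solution set $S=\{a_1,\ldots,a_t\}$ to the virtual set $\widetilde S=\{a\in\{0,1\}^n : |a|\geq h\}\supseteq S$. Padding affine combinations with zero coefficients shows $A_S\subseteq A_{\widetilde S}$, so $\nrm{\svec}$ is bounded below by the length of the shortest vector in $A_{\widetilde S}$, which only depends on $h$.

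Second, I would carry out the symmetrization step. The symmetric group $S_n$ acts on the Boolean variables and induces a unitary coordinate-permutation action on monomial vectors; since $\widetilde S$ is $S_n$-invariant, so is $A_{\widetilde S}$. Strict convexity of the squared $\ell_2$-norm on an affine subspace makes the shortest vector unique, and averaging this shortest vector with any permuted copy would produce a strictly shorter element of $A_{\widetilde S}$ unless the two agree, forcing $S_n$-invariance. A straightforward orbit-counting argument then identifies the $S_n$-invariant vectors of $A_{\widetilde S}$ with the affine hull of the symmetrized monomial solution vectors $\textbf{v}_h,\textbf{v}_{h+1},\ldots,\textbf{v}_n$, since an $S_n$-invariant affine combination of $\svec^{(a)}$'s must assign equal weight to all $a$ in each Hamming-weight orbit.

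Third, using the orthonormal expansion \eqref{eq:symmeticVectors} of each $\textbf{v}_i$ in the basis $\textbf{b}_s$, the pairwise inner products $\ip{\textbf{v}_i}{\textbf{v}_j}$ for $i,j\in\{h,\ldots,n\}$ reduce to exactly the minor $G^{(h)}$ from \eqref{eq:symmeticVectorsGram}. The triangular coefficient pattern in \eqref{eq:symmeticVectors} combined with $d\geq n$ shows that $\textbf{v}_h,\ldots,\textbf{v}_n$ are linearly independent, so $G^{(h)}$ is positive definite. Applying \Cref{lem:shortestVec} to the matrix with columns $\textbf{v}_h,\ldots,\textbf{v}_n$ yields that the squared length of the shortest vector in their affine hull equals $1/\ip{\onevec}{(G^{(h)})^{-1}\onevec}=\max\{\gamma : G^{(h)}-\gamma\onevec\cdot\onevec^T\succeq 0\}$, which chained with the previous reductions gives the claimed bound.

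The main obstacle is the symmetrization step: one must argue carefully that taking the $S_n$-invariant slice of $A_{\widetilde S}$ gives precisely the affine hull of $\textbf{v}_h,\ldots,\textbf{v}_n$, neither smaller (from possible dependencies among the $\textbf{v}_i$) nor larger (from miscounting orbits), and that invariance passes through the minimum-norm operation via the midpoint/strict-convexity argument. Once this identification is established, the remainder is a direct Gram-matrix computation and a single invocation of \Cref{lem:shortestVec}; the equivalence between the explicit expression $1/\ip{\onevec}{(G^{(h)})^{-1}\onevec}$ and the SDP maximum follows from the lemma itself.
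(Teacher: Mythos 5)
Your proof is correct and follows essentially the same route as the paper (bound $\kappa_{\bvec}(\lemacmat)\geq\nrm{\svec}$, enlarge the affine hull, symmetrize, and apply \Cref{lem:shortestVec} to the Gram matrix of the $\textbf{v}_i$). The one place you differ is the enlargement step: the paper first enlarges $S$ to the set of all permutations of the actual solutions $S'$ and then (implicitly) pads further to include every $\textbf{v}_i$ with $i\geq h$, whereas you enlarge directly to $\widetilde{S}=\{a\in\{0,1\}^n : |a|\geq h\}$, which makes $A_{\widetilde{S}}$ manifestly $S_n$-invariant and justifies from the outset why the resulting Gram matrix is exactly $G^{(h)}$, depending only on the minimal Hamming weight $h$; this is a slightly cleaner bookkeeping of the same idea. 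One cosmetic point worth noting (which predates your write-up and is present in the theorem statement itself): \Cref{lem:shortestVec} returns the length-\emph{square} of the shortest vector, so chaining it with \eqref{eq:condStrat} actually gives $\kappa_{\bvec}(\lemacmat)\geq\sqrt{1/\ip{\onevec}{(G^{(h)})^{-1}\onevec}}$; the missing square root looks like a typo in \eqref{eq:GramThm}, and your phrase ``gives the claimed bound'' silently inherits it — worth making the square root explicit.
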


\jnote{By computing the same quantity two different ways we get 
	$$G_{h h}=\ip{\textbf{v}_h}{\textbf{v}_h}=\sum_{s=1}^h d^s \binom{h}{s}\binom{h}{s}/\binom{n}{s} \overset{!}{=} \left(\sum_{i=0}^{h-1} ((d+1)^{h-i}-1) \binom{h}{i}\binom{n-h}{i}\right)/\binom{n}{h}.$$}
\anote{Maybe we could use Stirling's approximation for replacing the binomial expressions in \eqref{eq:symmeticVectorsGram} with exponentials in order to prove a lower bound analytically.}

The expression in \eqref{eq:GramThm} is difficult to bound analytically, but it appears to be exponentially large in terms of $h$ for large enough $d$. In particular, we could verify\footnote{Aided by symbolic computations executed by Mathematica 12.3 on Linux. See
~\cite{SrcWolframNotebook2022}
 for the code.} that for max degree $d=3n$ \Cref{eq:GramThm} is lower bounded by $h^h/2$ for every $h\in[n]$ up to $n=300$.

\subsection{Comparison to brute-force search}\label{subsec:comparison}

In case there is a unique Boolean solution we showed that the lower bound of the running time of the quantum algorithm using the HHL algorithm is exponential in the Hamming weight of the unique Boolean solution, and we provided strong evidence that this is also true when there are multiple solutions. 

It is useful to compare the HHL-based approach to classical brute-force search and also to using Grover's algorithm.  
In case we know that the unique solution has Hamming weight $h$, we can simply classically search through all the $\binom{n}{h}$ different Hamming-weight-$h$ assignments of the original polynomial system. We can also use Grover search to find such an assignment with $\bigO{\sqrt{\binom{n}{h}}}$ evaluations of the polynomials. Even if we do not a priory know the Hamming weight, we can classically iterate over increasing Hamming weights $w$ of $n$-bit strings which requires at most $\bigO{\sum_{w=0}^{h}\binom{n}{w}}$ different possible assignments to be checked before finding the solution, which in the case $h\leq n/2$ can be bounded by $\bigO{\sqrt{h}\binom{n}{h}}$ as we show in \cref{apx:binom}. For the $h> n/2$ case, a similar complexity can be achieved by searching through decreasing Hamming weights. 
In the quantum case, naively iterating through increasing Hamming weights and using Grover's algorithm for each weight gives a complexity bound of $\bigO{h\sqrt{\binom{n}{h}}}$. 

Moreover, we can use a slight variant of Grover's algorithm for searching through an unknown sized search space\footnote{One can use an algorithm analogous to the ``exponential Grover search''~\cite{boyer1998TightBoundsOnQuantumSearching} in order to check for a unique solution in subsequently enlarged search spaces corresponding to larger and larger Hamming-weights. By carefully choosing the sequence of upper bounds on the Hamming weights such that the search space expands in each consecutive iteration by a bounded multiplicative factor in $[c,C]\subset (1,\infty)$ the claimed running time bound follows. } which requires only $\bigO{\sqrt[4]{h}\sqrt{\binom{n}{h}}}$ evaluations of the polynomials. By comparing this to the lower bounds of \Cref{thm:LowerboundMFMultiple} one can see that in case $d+h\geq n$ Grover's algorithm performs at least as good as the HHL based algorithm (up to some potential lower order correction $\sqrt[4]{h}$), and the algorithm of Chen and Gao~\cite{ChenGao2017} where the max degree $d=3n$ is definitely outperformed by Grover search.

In case there are multiple solutions, but all their Hamming weights are the same, \Cref{thm:LowerboundMFMultiple} ensures that we do not get a bigger reduction in the condition number than the analogous speedup we can already achieve by plain Grover search. So the above Grover-based algorithm still performs just as competitively. 

In the general case of having multiple solutions with different Hamming weights, the situation is harder to analyze, but we could still obtain an exponential lower bound on \tqlscn in terms of the smallest Hamming weight solution up to $n=300$, providing strong evidence for Chen and Gao's algorithm~\cite{ChenGao2017} having a best case complexity that is exponentially large in terms of the minimal Hamming weight of a solution, making it unlikely that their algorithm would give a substantial improvement over brute-force Grover search.
\section{The Boolean Macaulay linear system and its \tqlscn}

In this section we give an equivalent but more efficient way to
represent the Macaulay matrix using the fact that we are only
searching for 0/1 solutions in $\mathbb C$. This results in a smaller
lower bound on the \tqlscn of size $\Omega(2^{h/2})$. While the
quantum algorithm's running time is still exponentially large for larger Hamming weight
solutions, for Hamming weight $h=\Theta(\log n)$ the smaller lower bound leaves open the possibility of
a quasipolynomial speedup compared to the classical brute-force search algorithm having running time
$\bigO{\binom{n}{h}}$.

\subsection{The Boolean Macaulay matrix over \texorpdfstring{$\C$}{C}}

In this section we again include the field polynomials
$\calF_2 = \{x_1^2-x_1, \ldots, x_n^2-x_n\}$ for the field $\FF_2$ 
together with the input polynomials $\calF_1$. Solving the system
$\calF = \calF_1 \cup \calF_2$ forces the roots to be effectively Boolean even
though the underlying field is $\C$.  This allows all monomials in an
equation to be replaced with equivalent multilinear versions and a
reduced Macaulay matrix will be defined that has a more compact form.
This was done in~\cite{bardet2013complexity} for finite fields, where
the extra equations prevented solutions from being in field
extensions. 
We derive the analogous matrix when the solutions are forced to be Boolean but the arithmetic is over $\C$.
Additionally, in our case (similarly to Chen and Gao's original construction~\cite{ChenGao2017}) the structure of the Boolean solutions makes it possible to extract the Boolean solutions from measuring the quantum state corresponding the the solution vectors (over $\C$).

Let $\psi : R \rightarrow R$ map a monomial to its multilinear image as
$\psi(\prod_{i=1}^{n} x_i^{a_i} ) = \prod_{i=1}^{n} x_i^{\min
  \{a_i,1\} } $, and extend it to $R=\mathbb{C}[x_1,\dots, x_n]$ by
linearity.  For example,
$\psi(3x_1^3x_2 -1) = 3x_1x_2- 1 = \psi(x_1^3x_2 - 2 x_1^2 x_2^2 + 4
x_1 x_2 - 1) \neq x_1 - x_1x_2-1 = \psi(x_1^3 - 2 x_1^2 x_2^2 + x_1
x_2 - 1)$.

Lemma~\ref{lem:BooM} will show that having max degree
higher than $1$ becomes redundant, so the following definition only
has rows up to max degree $1$, and in this section we will set the total 
degree $d=n$, the number of variables.  For notation, let $\mon$ and $\monb$ denote
monomials, and let $\mlmon$, $\mlmonb$, and $\mlmonc$ denote
multilinear monomials (i.e., monomials with max degree at most 1).

\begin{definition}\label{def: BooleanMacaulay}
  The {\em Boolean Macaulay matrix $\bmacmat$ of degree $d$} of
  $\calF_1=\{f_1,\ldots,f_m\}\subseteq \CC[X]$ is the matrix where each
  row is labeled by a pair of polynomials $(\mlmon,f)$ and contains
  the corresponding coefficient vector of the polynomial $\psi(\mlmon f)$.
  The rows range over all $f\in \calF_1$ and multilinear monomials $\mlmon$ such
  that $\psi(\mlmon f)$ has degree at most $d$.  The columns are labeled by
  the set of multilinear monomials in $x_1,\ldots, x_n$ of degree at most $d$ and
  are ordered with respect to a specified monomial ordering.  The matrix
  element in the row corresponding to $(\mlmon,f)$ and the column
  corresponding to the monomial $\mlmonb$ is the coefficient of $\mlmonb$
  in the polynomial $\psi(\mlmon f)$.
\end{definition}

Note that compared to the Macaulay matrix, in addition to forcing
answers to be Boolean, the Boolean Macaulay matrix is reduced in a
certain way, by eliminating polynomials with max degree at least 2.
Next we will show that the Boolean Macaulay matrix can be obtained as
a submatrix of the Macaulay matrix $\macmat $ of max degree $d$
corresponding to the set of polynomials $\calF = \calF_1 \cup \calF_2$
after Gaussian reduction on the rows.

First we consider the special case when $\calF_1 = \emptyset$ and
perform the row reduction on the Macaulay matrix of $\calF_2$ to show
that the field equations $\calF_2$ take a special form.

\begin{lemma}
  \label{lem:BooMempty}
  Let the Macaulay matrix $\macmat_2$ of max degree $d$ of $\calF_2$ have its columns ordered such that they are partitioned into two parts the following way: 
  the labels on the right side are  multilinear monomials (including the degree 0 monomial 1) and ordered in ascending order with respect to the integer represented by the exponent vector of the multilinear monomial, 
  and let the left side columns be labeled by nonmultilinear monomials and ordered under any monomial order.

  Then using row operations, $\macmat_2 = [L_2\ \ R_2]$ can be reduced
  to $\macmat_2'= [ I_2 \ \   B_2 ]$ where $ I_2$ is the
  identity matrix of dimension $(d+1) ^ n-2^n$ with rows and columns
  labeled by nonmultilinear monomials, and rows with zeros are removed.
\end{lemma}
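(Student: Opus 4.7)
The plan is to explicitly exhibit a row-reduction of $\macmat_2$ that produces one pivot row per nonmultilinear monomial (yielding the identity block $I_2$) and sends every other row to the zero row. I will argue in two steps: (i) a triangularization of the left block $L_2$ using an inductive choice of pivots, and (ii) an ideal-theoretic argument ensuring that any row whose left-block part collapses to zero after row reduction also has a zero right-block part.

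For step (i), I would induct on the total degree of a nonmultilinear monomial $\mon$. For such a $\mon$ with exponent vector $(a_1,\dots,a_n)$, pick some index $i$ with $a_i \geq 2$ and consider the row of $\macmat_2$ labeled by $(\mon/x_i^2,\ x_i^2-x_i)$. This row genuinely lies in $\macmat_2$ because $(\mon/x_i^2)\cdot(x_i^2-x_i)=\mon-\mon/x_i$ has max degree at most $d$. It has exactly two nonzero entries: $+1$ in column $\mon$ and $-1$ in column $\mon/x_i$. Since $\mon/x_i$ has strictly smaller total degree, if it is still nonmultilinear then by the inductive hypothesis a pivot row for it has already been constructed, so we can eliminate the $-1$ entry and be left with a row whose only nonzero left-block entry is $+1$ in column $\mon$; if $\mon/x_i$ is already multilinear, the row trivially has this property. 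Ordering the nonmultilinear left-block columns compatibly with total degree, this construction is triangular with unit diagonal, which Gauss--Jordan reduces to $I_2$. A direct count confirms that there are exactly $(d+1)^n - 2^n$ nonmultilinear monomials of max degree at most $d$, matching the claimed dimension of $I_2$.

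For step (ii), I would use the fact that every row of $\macmat_2$ is the coefficient vector of a polynomial in the ideal $\langle x_1^2-x_1,\dots,x_n^2-x_n\rangle$, and therefore represents a polynomial vanishing identically on $\{0,1\}^n$. Row operations preserve this property. So any non-pivot row that reduces to zero in the left block represents a multilinear polynomial which still vanishes on $\{0,1\}^n$; since multilinear polynomials in $n$ variables are uniquely determined by their values on $\{0,1\}^n$, such a polynomial must be identically zero, i.e.\ the right-block entries must vanish as well. Hence every non-pivot row collapses to a true zero row and can be discarded, leaving precisely $[I_2\ B_2]$.

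The main obstacle I anticipate is the careful bookkeeping in step (i): one must verify that the inductive elimination is compatible with the specified column ordering (the left-side monomial order can be freely chosen to be, e.g., by total degree, which is what makes the triangularization go through) and that the pivot row $(\mon/x_i^2,\ x_i^2-x_i)$ truly exists in $\macmat_2$ for every nonmultilinear $\mon$ of max degree at most $d$. Once (i) is in place, the vanishing argument of (ii) is essentially immediate, and together they deliver the desired reduced form.
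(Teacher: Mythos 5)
Your proof is correct, and it splits naturally into the same two phases as the paper's: a degree-descending triangularization producing one reduced row $\monc - \psi(\monc)$ per nonmultilinear monomial $\monc$, followed by disposing of the remaining rows. Step (i) is essentially the paper's first step -- the paper processes every row the same way you process your chosen pivots, and your observation $\psi(\mon/x_i)=\psi(\mon)$ is exactly why the telescoping terminates at $\mon-\psi(\mon)$. Where you genuinely diverge is step (ii). The paper argues combinatorially: after the triangularization, any two rows $(\mon,x_i^2-x_i)$ and $(\monb,x_j^2-x_j)$ with $\mon x_i^2=\monb x_j^2$ become identical (same leading nonmultilinear monomial, same multilinear image), so duplicates can be zeroed out one at a time, leaving exactly one pivot row per $\monc$. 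You instead invoke a semantic argument: row operations stay inside the span of $\{\mon(x_j^2-x_j)\}\subseteq\langle\calF_2\rangle$, so any row whose left block is cleared is a multilinear polynomial vanishing on $\{0,1\}^n$, and linear independence of the $2^n$ multilinear monomials as functions on the Boolean cube forces the right block to vanish too. Both arguments are valid; the paper's version is more explicit and doesn't need the function-space fact, while yours is shorter, avoids chasing which rows collide, and would generalize more readily (e.g.\ to other fields/ideals where the "points" are the common zeros of the field polynomials).
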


\begin{proof}
  The rows of $\macmat_2$ are indexed by pairs of polynomials
  $(\mon,x_j^2-x_j)$, where $\mon$ is a monomial and
  $\maxdeg \mon (x_j^2-x_j) \leq d$.  The approach is to first change
  each row, which starts with coefficients for a polynomial
  $\Pi_i x_i^{a_i} - x_j^{-1}\Pi_i x_i^{a_i}$ for some $j$ in
  $\macmat_2$, to the coefficients of
  $\Pi_i x_i^{a_i} - \Pi_i x_i^{\min \{a_i,1\}}$.  At this point the
  left side of the matrix has at most one 1 in each row.  The second
  step is to zero out the bottom rows.

  For the first step, work in descending total degree of the
  polynomials, starting at degree $nd$.  Let the current row have the
  coefficients of $\Pi_i x_i^{a_i} - \Pi_i x_i^{b_i}$ during the
  algorithm.  Let $b_j \geq 2$ for some $j$, or else this row is
  reduced.  Because $\deg \Pi_i x_i^{b_i} < \deg \Pi_i x_i^{a_i}$, the rows where $\Pi_i x_i^{b_i}$ is the highest degree term have not changed yet, and therefore, one
  of the rows has the coefficients of
  $\Pi x_i^{b_i} - x_j^{-1} \Pi x_i^{b_i}$.  Adding this row changes
  the current row to
  $\Pi_i x_i^{a_i} - \Pi_i x_i^{b_i} +(\Pi_i x_i^{b_i} - x_j^{-1}
  \Pi_i x_i^{b_i}) = \Pi_i x_i^{a_i} - x_j^{-1} \Pi_i x_i^{b_i}$,
  which has decreased the total degree of the second term by one while
  keeping the set of variables the same.  This is repeated until the row has the
  coefficients of $\Pi_i x_i^{a_i} - \Pi_i x_i^{\min \{a_i,1\}}$.

  At the end of the first step, each row in the left side (i.e.,
  columns indexed by nonmultilinear monomials) has at most one 1.
  This is in fact a constructive argument showing that 
  \[
    \prod_{i=1}^{n} x_i^{a_i} - \prod_{i=1}^{n} x_i^{\min \{a_i,1\} }
    \in \left\langle \calF_{2}
    \right\rangle.
  \]
	
  Consider any two rows indexed by $\mon (x_i^2-x_i)$ and
  $\monb(x_j^2- x_j)$.  If $\mon x_i^2 = \monb x_j^2$, they have the
  same set of variables, and therefore
  $\psi(\mon x_i) = \psi(\monb x_j)$, so the rows are equal and one
  can be eliminated (zeroed out). Keep doing this until for every leading nonmultilinear
  monomial there is only one row where the corresponding coefficient is nonzero.
	
	Because for every column in the left part there is a unique nonzero row with the corresponding leading monomial, the matrix can be written (up to permutation of the rows) as
  $\begin{bmatrix} I_2 & B_2\\
    0 & 0 
  \end{bmatrix}$.
\end{proof}

\begin{lemma}  \label{lem:BooM}
  Let $
  \macmat=
  \begin{bmatrix}
    L_1 & R_1\\
    L_2 &  R_2
  \end{bmatrix}
  $ be the Macaulay matrix for $\calF_1 \cup \calF_2$ with the row and
  column ordering from Lemma~\ref{lem:BooMempty}.  Using row
  operations (and then removing some zero rows), $\macmat$ can be reduced to $ \macmat'=
	\begin{bmatrix}
	0 & \bmacmat\\
	I_2 &  B_2
	\end{bmatrix}
	$ where $\bmacmat$ is the Boolean Macaulay matrix of $\calF_1$, 
	and $I_2, B_2$ are as in Lemma~\ref{lem:BooMempty}. 
\end{lemma}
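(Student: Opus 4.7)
The plan is to proceed in two phases that build on \Cref{lem:BooMempty}. \emph{Phase 1} applies \Cref{lem:BooMempty} verbatim to the lower block $[L_2 \ R_2]$, reducing it to $[I_2 \ B_2]$ (and discarding the zero rows thus produced). After this step, every nonmultilinear column of the full matrix contains a unique pivot $1$ coming from an $\calF_2$-row, whose polynomial encoding is exactly $\prod_i x_i^{a_i} - \prod_i x_i^{\min\{a_i,1\}}$, as established in the proof of \Cref{lem:BooMempty}. \emph{Phase 2} then uses these pivots to clear the corresponding nonmultilinear columns in every $\calF_1$-row $(\mon, f)$. Since the pivots lie in pairwise distinct columns, the elimination steps are independent; and because phase 2 only adds multiples of bottom rows to top rows, the lower block $[I_2 \ B_2]$ is left untouched.

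To identify the resulting upper-right block with $\bmacmat$, I would translate the row operations back to polynomials. An $\calF_1$-row originally labelled by $(\mon, f)$ represents the coefficient vector of $\mon f$; each subtraction of a pivot row indexed by $\prod_i x_i^{a_i}$ replaces one occurrence of the nonmultilinear monomial $\prod_i x_i^{a_i}$ in $\mon f$ by its multilinear image $\prod_i x_i^{\min\{a_i,1\}}$. Iterating over all nonmultilinear monomials occurring in $\mon f$, the reduced row becomes the coefficient vector of $\psi(\mon f)$. The key monomial-level identity I would use is $\psi(\mon_1 \mon_2) = \psi(\psi(\mon_1)\, \mon_2)$, which holds because $\psi$ of a monomial depends only on the set of variables present, and $\mon_1$ and $\psi(\mon_1)$ share the same support; extending by linearity to $f$ gives $\psi(\mon f) = \psi(\psi(\mon)\, f)$. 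Hence the reduced row for $(\mon, f)$ coincides with the Boolean Macaulay row indexed by $(\psi(\mon), f)$. Conversely, every $\bmacmat$-row $(\mlmon, f)$ is already realised as the Macaulay row $(\mlmon, f)$ itself (which is present in $\macmat$ whenever $d \geq \maxdeg(\mlmon f)$, and in particular for $d = n$). After removing zero rows and duplicates in the upper block, the top block is exactly $[\,0 \ \bmacmat\,]$, as claimed.

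The main obstacle is bookkeeping rather than deep structure: one must confirm that the phase-2 eliminations are well-defined (pairwise distinct pivot columns, no disturbance of the $[I_2 \ B_2]$ block), that the row-to-polynomial translation through $\psi$ is faithful (via $\psi(\mon f) = \psi(\psi(\mon)\, f)$), and that removing the resulting duplicate and zero rows in the $\calF_1$-block does not affect the claimed block form. None of these steps is intrinsically difficult; the content of the lemma lies in combining them coherently to produce the stated normal form.
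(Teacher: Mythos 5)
Your proof follows essentially the same route as the paper's: reduce the $\calF_2$ block to $[I_2 \ \ B_2]$ via Lemma~\ref{lem:BooMempty}, use those pivots to clear the nonmultilinear columns of each $\calF_1$-row $(\mon,f)$ so that it becomes the coefficient vector of $\psi(\mon f)$, and identify the result with rows of $\bmacmat$ by observing $\psi(\mon f) = \psi(\psi(\mon)\,f)$. The only small imprecision is your final clean-up phrase ``removing zero rows and duplicates'': to land exactly on $\bmacmat$ as given in Definition~\ref{def: BooleanMacaulay} one should remove precisely the rows labeled by a nonmultilinear $\mon$ (each now a duplicate of the row $(\psi(\mon), f)$), while retaining any remaining zero or coincidentally equal rows of $\bmacmat$ itself, a point the paper flags explicitly.
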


\begin{proof}
  By Lemma~\ref{lem:BooMempty}, using row operations on the $\calF_2$ submatrix of 
  $\macmat$ we get a matrix 
  $\begin{bmatrix}
    L_1 & R_1 \\
    I_2 &  B_2\\
  \end{bmatrix}$, where zero rows from $\calF_2$ are removed.
	
  Row operations utilizing $I_2$ can then be used to zero out the top left, resulting in 
  $\begin{bmatrix}
    0 & R_1'\\
    I_2 &  B_2
  \end{bmatrix}$.
  From the polynomial perspective, this maps all the nonmultilinear polynomials to their
  corresponding multilinear polynomials under $\psi$, 
  i.e., for each monomial $\prod_{i=1}^{n} x_i^{a_i}$, the map encodes the coefficient vector of
  $\psi((\prod_{i=1}^{n} x_i^{a_i}) f_j), 1 \leq j \leq m$ into the Macaulay matrix as a row vector.

Recall that the Macaulay matrix has rows labeled by pairs; observe that rows $(\mon,f_i)$ and $(\monb,f_i)$ will be equal when $\psi(\mon) =\psi(\monb)$. In particular for any nonmultiliear monomial $\mon$ the rows $(\mon,f_i)$ and $(\psi(\mon),f_i)$ will be equal at this point, so we can eliminate (zero out and then remove) any row indexed by nonmultiliear monomials. To be compatible with \Cref{def: BooleanMacaulay} we choose not to further reduce / remove rows despite the fact $\bmacmat$ might have zero rows, for example, rows with $\psi(\mon f)=\psi(\monb f')$, but $f\neq f'$.

  As claimed, in matrix notation we get $ \macmat'=
  \begin{bmatrix}
    0 & \bmacmat\\
    I_2 &  B_2
  \end{bmatrix}. $ 
\end{proof}

As in the general case let $\bmacmat = [M \quad -\bvec]$ define the
Boolean Macaulay linear system as $M \svec = \bvec$,
where the entries of $\svec$ are labeled by the nontrivial multilinear
monomials and $\bvec= \begin{bmatrix}
  1\\
  \textbf{0}
\end{bmatrix}$.  

Recall that a matrix is $s$-sparse if it has at most $s$ entries in
any row or column. 

\begin{lemma}\label{lemma:sparsity}
  The Boolean Macaulay matrix $\bmacmat$ of total degree $d$ of
  $\calF_1$ is an $\bigO{m \cdot \sparsity{\calF_1}}$-sparse matrix.
\end{lemma}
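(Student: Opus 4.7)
The plan is to bound the number of nonzero entries per row and per column of $\bmacmat$ separately; the row bound is immediate, while the column bound exploits that (in Problem~\ref{prob:MQC}) each $f\in\calF_1$ is quadratic, so only a bounded number of multilinear monomials $\mlmon$ can produce any particular target monomial after multiplication by a term of $f$ and subsequent application of $\psi$.

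For row-sparsity, fix a row indexed by $(\mlmon,f)$. Its entries are the coefficients of $\psi(\mlmon f)$. Multiplication by the single monomial $\mlmon$ sends distinct terms of $f$ to distinct monomials bijectively, and $\psi$ maps monomials to monomials, possibly merging several of them into one, so the number of nonzero coefficients of $\psi(\mlmon f)$ is at most $\sparsity{f}\leq\sparsity{\calF_1}$.

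For column-sparsity, fix a multilinear monomial $\mlmonb$ and let $V\subseteq\{x_1,\ldots,x_n\}$ be the set of variables appearing in it. A row $(\mlmon,f)$ has a nonzero entry in column $\mlmonb$ only if some term $t$ of $f$ satisfies $\psi(\mlmon\, t)=\mlmonb$; writing $V_\mlmon$ and $V_t$ for the variable sets of $\mlmon$ and $t$, this is equivalent to $V_\mlmon\cup V_t=V$. In particular $V_t\subseteq V$, and then $V_\mlmon$ must contain $V\setminus V_t$ and be contained in $V$, which leaves $2^{|V_t|}$ possible choices for $V_\mlmon$ (equivalently, for $\mlmon$). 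Since every $f\in\calF_1$ is quadratic, $|V_t|\leq\deg(t)\leq 2$, so each term $t$ yields at most $4$ valid $\mlmon$. Summing over the at most $\sparsity{f}$ terms of each $f$, and then over the $m$ polynomials in $\calF_1$, bounds the number of nonzero entries in column $\mlmonb$ by $4\sum_{f\in\calF_1}\sparsity{f}\leq 4m\cdot\sparsity{\calF_1}$.

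Combining the two bounds yields that $\bmacmat$ is $\bigO{m\cdot\sparsity{\calF_1}}$-sparse. There is no genuine obstacle here; the only points that require care are the observation that $\psi$ can only merge monomials (never split them), and the use of the quadratic assumption on $\calF_1$ to make the per-term factor $2^{|V_t|}$ a constant independent of the matrix degree parameter~$d$.
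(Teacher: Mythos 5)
Your proof is correct and follows essentially the same route as the paper's: bound the row sparsity by $\sparsity{\calF_1}$ directly, then bound the column sparsity by fixing a target monomial and counting, for each quadratic term $t$ of each $f$, the multilinear $\mlmon$ that can hit it, obtaining the constant $4=2^{|V_t|}$ per term. The only cosmetic difference is that you phrase the count via variable sets and unions, whereas the paper phrases it via divisors of $t$ (equivalently, subsets of $V_t$); these are identical for multilinear monomials.
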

\begin{proof}
  The Boolean Macaulay matrix $\bmacmat$ is constructed by placing $\psi(\mlmond f)$ in a row for a multilinear monomial $\mlmond$ and  $f\in \calF_1$, so the support of each row has size at most  $\bigO{\sparsity{\calF_1}}$.

For the column sparsity first consider the Boolean Maculay matrix of $\{\mlmond\}$, which has a $1$ matrix element at column $\mlmonb$ and row $(\mlmonc,\mlmond)$ if and only if $\mlmonb=\psi(\mlmonc\cdot\mlmond)$. This can only happen if $\mlmond$ divides $\mlmonb$, so we can define $\bar{\mlmond}:=\mlmonb/\mlmond$. It is easy to see that $\mlmonb=\psi(\mlmonc\cdot\mlmond)$ if and only if $\mlmonc = \bar{\mlmond}\cdot\mlmon_d$ for some monomial $\mlmon_d$ that divides $\mlmond$. This implies that the column sparsity of the Boolean Maculay matrix of $\{\mlmond \}$ equals the number of divisors of $\mlmond $ which is at most $4$ if the multilinear monomial $\mlmond$ has (total) degree at most $2$.

Now consider the Boolean Maculay matrix of $\{f\}$ for some (at most) quadratic polynomial $f \in \calF_1$. Observe that $f$ is a linear combination of at most $\sparsity{\calF_1}$ monomials of degree at most $2$ and the Boolean Maculay matrix of $\{f\}$ is likewise the linear combination of the Boolean Maculay matrix of these (at most) quadratic monomials. So the column sparsity of the Boolean Maculay matrix of $\{f\}$ is at most $4\cdot \sparsity{\calF_1}$. Finally, the entire Boolean Macaulay matrix of $\calF_1$
is simply given by stacking the Boolean Maculay matrices of $\{f\}$ for $f \in \calF_1$, 
so the total column sparsity is at most $4m\cdot \sparsity{\calF_1}$.
\end{proof}

Note that this also implies that $M$, a submatrix of $\bmacmat$, is also sparse. Moreover, the location and value of the nonzero entries of each column/row of $\bmacmat$ can be efficiently computed.

Now we show that the Boolean Macaulay linear system is equivalent to the Macaulay linear system. It follows that solving the Boolean Macaulay linear system returns a correct solution of the Boolean polynomial system.
\footnote{This observation also implies that the complete solving degree in Chen and Gao's original approach is always at most $n+2$, tightening their upper bound $3n$.}

\begin{lemma}\label{lem: Macaulaylinear}
	
	Let $M_1 \vec{y}_1 =\vec{b}_1$ be the Macaulay linear system of a polynomial system $\calF=\calF_{1} \cup \calF_{2}$ and let $M_2 \vec{y}_2 = \vec{b}_2$ be the corresponding Boolean Macaulay linear system, where the Macaulay matrix is $\macmat = [M_1 \quad -\vec{b}_1]$, the Boolean Macaulay matrix is $ \bmacmat = [M_2 \quad -\vec{b}_2]$. Then a solution $\hat{y}_2$ of the Boolean Macaulay linear system
	$M_2 \vec{y}_2 = \vec{b}_2$  corresponds to a solution $\hat{y}_1$ of
	the Macaulay linear system $M_1 \vec{y}_1 =\vec{b}_1$.
\end{lemma}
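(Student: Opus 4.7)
The plan is to derive the claim as an immediate consequence of \cref{lem:BooM}, exploiting the fact that elementary row operations on an augmented matrix preserve the solution set of the corresponding linear system. No further algebraic machinery (in particular no Gr\"obner-basis argument) should be required.

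First, I would partition the columns of $\macmat$ into a left block indexed by nonmultilinear monomials and a right block indexed by multilinear monomials; the last column $-\vec{b}_1$, indexed by the constant monomial $1$, sits at the right end of this right block. By \cref{lem:BooM}, a sequence of row operations (followed by the removal of some zero rows) reduces $\macmat$ to $\macmat' = \begin{bmatrix} 0 & \bmacmat \\ I_2 & B_2 \end{bmatrix}$. Because every polynomial in $\calF_2$ has zero constant term and the reduction inside \cref{lem:BooMempty} produces $[I_2 \mid B_2]$ using $\calF_2$-rows only, the last column of $B_2$ is identically zero. Splitting $\vec{y}_1$ into nonmultilinear and multilinear coordinate blocks $\vec{y}_{\mathrm{nl}}$ and $\vec{y}_{\mathrm{ml}}$, the row-reduced linear system is thus the pair: the top block $M_2 \vec{y}_{\mathrm{ml}} = \vec{b}_2$ (which is exactly the Boolean Macaulay equation), and the homogeneous bottom block $I_2 \vec{y}_{\mathrm{nl}} + B_2 \vec{y}_{\mathrm{ml}} = \vec{0}$.

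Given this structure the extension is immediate: starting from any solution $\hat{y}_2$ of the Boolean Macaulay linear system, define $\hat{y}_1$ by setting its multilinear coordinate block to $\hat{y}_2$ and its nonmultilinear coordinate block to $-B_2 \hat{y}_2$. Row-equivalence then forces $M_1 \hat{y}_1 = \vec{b}_1$. Conceptually this extension is the natural one: the constructive step in the proof of \cref{lem:BooMempty} establishes $\prod_i x_i^{a_i} - \prod_i x_i^{\min\{a_i,1\}} \in \langle \calF_2 \rangle$ for every monomial, so $B_2$ simply encodes the multilinearization map $\psi$ at the level of variable indices. Equivalently, the extension can be described coordinate-wise as $(\hat{y}_1)_{\prod_i x_i^{a_i}} = (\hat{y}_2)_{\prod_i x_i^{\min\{a_i,1\}}}$, i.e., each nonmultilinear coordinate is copied from the corresponding multilinear coordinate.

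The only subtlety I expect to need careful handling is the bookkeeping of the constant column under the reduction: one must explicitly check that the top-right block really contributes $-\vec{b}_2$ as its last column (matching the Boolean Macaulay right-hand side) while the last column of $B_2$ is zero (so the bottom block is genuinely homogeneous). Both statements follow because $\calF_2$ contributes zero to every constant column and the reduction of \cref{lem:BooM} never mixes $\calF_1$-rows into the $[I_2 \mid B_2]$ block, but these placements should be stated explicitly to make the equivalence rigorous. Once that is done, the lemma follows directly.
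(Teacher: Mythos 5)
Your proposal is correct and follows essentially the same route as the paper's proof: invoke \cref{lem:BooM} to row-reduce $\macmat$ to $\begin{bmatrix} 0 & \bmacmat \\ I_2 & B_2 \end{bmatrix}$, observe that the constant column of $B_2$ vanishes because $\calF_2$ has no constant terms, and then lift $\hat{y}_2$ to $\hat{y}_1$ by appending the nonmultilinear block $-B_2'\hat{y}_2$ (what you write as $-B_2\hat{y}_2$, since the constant column drops out). The only cosmetic difference is that the paper explicitly names the truncated block $B_2' = B_2$ minus its constant column, which you should do as well to keep the dimensions of the bottom equation $I_2\vec{y}_{\mathrm{nl}} + B_2'\vec{y}_{\mathrm{ml}} = \vec{0}$ unambiguous.
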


\begin{proof}
	 By Lemma~\ref{lem:BooM}, 
	$\macmat = [M_1 \quad -\vec{b}_1]$
	can be reduced to  $	\begin{bmatrix}
	0 & \bmacmat\\
	I_2 &  B_2
	\end{bmatrix}$ by row operations, where  $\bmacmat = [M_2 \quad -\vec{b}_2]$ is the Boolean Macaulay matrix.
	 Also, $ B_2 = [B_2'  \quad 0]$ because the last column of the reduced Maculay matrix  $	\begin{bmatrix}
	0 & \bmacmat\\
	I_2 &  B_2
	\end{bmatrix}$ is indexed by the degree $0$ monomial $1$ and the polynomials generated from $\calF_{2}$ have no constant terms. Therefore $	\begin{bmatrix}
	0 & \bmacmat\\
	I_2 &  B_2
	\end{bmatrix}=
	\begin{bmatrix}
	0 & M_2 & -\vec{b}_2\\
	I_2 &  B_2'  &   0
	\end{bmatrix}
	$.
	
Since performing row operations on the augmented matrix of a linear system does not change the set of solutions,
	solving the Macaulay linear system 
	$M_1 \vec{y}_1 =\vec{b}_1$
	is equivalent to solving the linear system 
	$
	\begin{bmatrix}
	0 & M_2\\
	I_2 &  B_2'
	\end{bmatrix}
	\begin{bmatrix}
	\vec{z}_1 \\
	\vec{y}_2
	\end{bmatrix}=
	\begin{bmatrix}
	\vec{b}_2\\
	0
	\end{bmatrix}
	$, where the entries of $\vec{y}_2$ and $\vec{z}_1$  are indexed by nontrivial multilinear monomials and nonmultilinear monomials respectively. 

	For the linear system 
	\[ 
	\begin{bmatrix}
	0 & M_2\\
	I_2 &  B_2'
	\end{bmatrix}
	\begin{bmatrix}
	\vec{z}_1 \\
	\vec{y}_2
	\end{bmatrix}=
	\begin{bmatrix}
	\vec{b}_2\\
	0
	\end{bmatrix}	
 	\] 
	we have $M_2 \vec{y}_2 = \vec{b}_2$, which is the Boolean Macaulay linear system, and $\vec{z}_1 + B_2' \vec{y}_2 = 
	0$. 
	
 If $\hat{y}_2$ is a solution of the Boolean Macaulay linear system $M_2 \vec{y}_2 = \vec{b}_2$ , set $\hat{z}_1 $ to be $- B_2' \hat{y}_2 $, then 	$
	\begin{bmatrix}
	\hat{z}_1\\
	\hat{y}_2
	\end{bmatrix}$ is a solution of the linear system $
	\begin{bmatrix}
	0 & M_2\\
	I_2 &  B_2'
	\end{bmatrix}
	\begin{bmatrix}
	\vec{z}_1 \\
	\vec{y}_2
	\end{bmatrix}=
	\begin{bmatrix}
	\vec{b}_2\\
	0
	\end{bmatrix}
	$. Because the Macaulay linear system is equivalent to the linear system $\begin{bmatrix}
	0 & M_2\\
	I_2 &  B_2'
	\end{bmatrix}
	\begin{bmatrix}
	\vec{z}_1 \\
	\vec{y}_2
	\end{bmatrix}=
	\begin{bmatrix}
	\vec{b}_2\\
	0
	\end{bmatrix}$, therefore, a solution $\hat{y}_2$ of the Boolean Macaulay linear system $M_2 \vec{y}_2 = \vec{b}_2$ corresponds to a solution $\hat{y}_1$ of
	the Macaulay linear system $M_1 \vec{y}_1 =\vec{b}_1$.
\end{proof}

As $M$ is a $\bigO{m\cdot\sparsity{\calF}}$-sparse row / column computable
matrix and we can efficiently prepare the sparse vector $\bvec$ as quantum state
$\left| b\right\rangle$, we can apply a QLS algorithm to ``solve'' the Boolean
Macaulay linear system $ M \svec = \bvec$, which takes time
$\bigOt{\text{poly}(n)\kappa(M) \log (1/\epsilon)}$ \cite{childs2015QLinSysExpPrec}.

The key parameter in the running time is the condition number of the matrix
$M $. Next we will provide a lower bound of the \tqlscn of $M$ and thus also a lower bound on known QLS algorithms.  

\subsection{Lower bound on the \tqlscn  \texorpdfstring{$\kappa_{\bvec}(M)$}{k{\scriptsize b}(M)} }

Suppose $\s_1,\s_2,\cdots,\s_t \in \{0,1\}^n$ are the $t$ solutions of
$\calF=~{\calF}_1 \cup~\calF_2$, where $\calF_1=\{{ f_1}, \dots, { f}_m\} $  and
$\calF_2=\{x_1^2-x_1,\ldots, x_n^2-x_n\}$, and let $h$ be the minimum Hamming weight of the $t$ solutions $\s_1,\s_2,\cdots,\s_t$. Let $\svec_1,\svec_2,\cdots,\svec_t$ be the corresponding solution vectors of the Boolean Macaulay linear system 
$M \svec= \bvec$ under the assignments $\s_1,\s_2,\cdots,\s_t$ respectively. 

In this case, we have $\nrm{M}\geq 1/2$ as $M$ has at least one matrix element which has an absolute value at least $1/2$ \footnote{After applying \ref{it:redN} there is at least one polynomial $f$ with a constant term of magnitude $1$. If $f$ does not have a degree-$1$ monomial $x_i$, then $x_i\cdot f$ has a magnitude $1$ degree-$1$ monomial $x_i$, so the row $(x_i,f)$ will have a matrix element of magnitude $1$. Otherwise, suppose the coefficient of $x_1$ in $f$ is $c_1$, then the rows $(1,f)$ and $(x_1,f)$  will have a matrix element of magnitude $c_1$ and $c_1-1$ respectively. Therefore, at least one of them has a magnitude of at least $1/2$.}. Analogously to \Cref{thm:LowerboundMFMultiple} we get:

\begin{corollary}\label{cor:LowerboundBooleanMFMultiple}
Let $\bmacmat = [M \quad -\vec{b}]$ be the Boolean Macaulay matrix of $\calF$ with columns labeled by multilinear monomials. Let $h$ be the minimum Hamming weight of the $t$ solutions $\s_1,\s_2,\cdots,\s_t$.
If all the $t$ solutions $\s_1,\s_2,\cdots,\s_t$ have the same Hamming weight $h$ or the minimum $\ell_2$-norm solution vector $\svec = M^{+}\vec{b}$ is in the convex hull of $\svec_1,\svec_2,\cdots,\svec_t$,
then the \tqlscn $\kappa_{\bvec}(M) \geq \frac{1}{2}\sqrt{(2^h-1)/t}$  of $M$ of $\calF$.  
\end{corollary}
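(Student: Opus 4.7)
The plan is to mimic the proof of \Cref{thm:LowerboundMFMultiple} with the arithmetic of multilinear monomials replacing that of max‑degree or total‑degree monomials. As in \eqref{eq:condStrat}, I would first convert a lower bound on the length of the minimum $\ell_2$‑norm solution into a lower bound on $\kappa_{\bvec}(M)$. Writing $\svec = M^+\bvec$ and using $\nrm{\bvec}=1$ together with the footnoted bound $\nrm{M}\geq 1/2$, I get
\begin{equation*}
\kappa_{\bvec}(M) \;=\; \nrm{M}\,\frac{\nrm{M^+\bvec}}{\nrm{\bvec}} \;\geq\; \tfrac{1}{2}\nrm{\svec}.
\end{equation*}
So it remains to show $\nrm{\svec}\geq\sqrt{(2^h-1)/t}$.

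Next I would compute the length of a single Boolean monomial solution vector. For an assignment $\s_i\in\{0,1\}^n$ of Hamming weight $h_i$, the entry of $\svec_i$ corresponding to a nontrivial multilinear monomial $\prod_{j\in S}x_j$ equals $1$ precisely when $S$ is a nonempty subset of the support of $\s_i$, and $0$ otherwise. Hence $\svec_i$ is a $0/1$ vector with Hamming weight $2^{h_i}-1$, so $\nrm{\svec_i}^2 = 2^{h_i}-1$. In particular, the $\svec_i$ of minimum Hamming weight has squared length $2^h-1$.

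With these computations in hand, I would apply one of the two geometric inequalities already proved in the previous section. In the equal-Hamming-weight case every $\svec_i$ has the same norm $\sqrt{2^h-1}$, so \Cref{lem:LowerboundMFMultiple1} shows that any vector in the affine hull of $\svec_1,\dots,\svec_t$ has length at least $\sqrt{(2^h-1)/t}$; the fact that $\svec = M^+\bvec$ actually lies in this affine hull follows from \Cref{lem: Macaulaylinear}, which identifies the solution sets of the Boolean and full Macaulay linear systems and transports the affine‑hull structure from the setup of \cite[Theorem 3.21 and Lemma 4.1]{ChenGao2017} onto the multilinear coordinates. In the convex-hull case, \Cref{lem:LowerboundMFMultiple2} directly yields $\nrm{\svec}\geq\nrm{\svec_1}/\sqrt{t}=\sqrt{(2^h-1)/t}$ using that the shortest $\svec_i$ has weight $h$. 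Either way, substituting into the first display gives the desired $\kappa_{\bvec}(M)\geq \tfrac{1}{2}\sqrt{(2^h-1)/t}$.

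The only nontrivial step is the verification that the minimum-norm solution $\svec$ lies in the affine hull of the monomial solution vectors in the equal-weight branch, since the convex-hull hypothesis is already built in for the other branch. This should be short: \Cref{lem: Macaulaylinear} gives an affine bijection between solutions of $M_1\vec y_1=\vec b_1$ and solutions of $M\vec y=\vec b$ via the block structure produced in \Cref{lem:BooM}, and the corresponding $\hat y_i$ for the full Macaulay system span its affine solution set for $d\geq n$, so restriction to multilinear monomial coordinates preserves this property.
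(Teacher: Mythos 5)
Your proposal is correct and follows exactly the route the paper intends: the paper states the corollary as an immediate analogue of \Cref{thm:LowerboundMFMultiple}, and you have filled in the analogous steps correctly — the bound $\nrm{M}\geq 1/2$ from the footnote, the count $\nrm{\svec_i}^2 = 2^{h_i}-1$ for the nonempty subsets of the support of a Hamming-weight-$h_i$ assignment (valid since the Boolean Macaulay system here uses total degree $d=n\geq h_i$), and an appeal to \Cref{lem:LowerboundMFMultiple1} or \Cref{lem:LowerboundMFMultiple2}. Your remark that $\svec = M^+\bvec$ lies in the affine hull of the monomial solution vectors, transported via \Cref{lem: Macaulaylinear} from the original Macaulay system, is precisely the assumption the paper sets up in the paragraph preceding \Cref{thm:LowerboundMFMultiple}, so nothing is missing.
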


For $h= \Theta(\log n)$, this lower bound does not rule out the
possibility that the Macaulay matrix has a polynomial condition
number, which would result in the quantum algorithm beating the brute-force classical algorithm that runs in time $\bigOt{\binom{n}{\log n}}$.

\subsection{Details comparing running times}

As we have discussed in \Cref{subsec:comparison} the classical brute-force algorithm
tries all $\binom{n}{j}$ choices for the locations of the 1's in the
solution $\s$ for each $j\leq h$, and its running time can be bounded $ \bigO{ \sqrt{h}\binom{n}{h}}$, where
\[ \forall  1\leq h \leq n: \left( \frac{n}{h} \right)^h \leq \binom{n
}{h} \leq \left( \frac{en}{h} \right)^h.\]  
Comparing the above expression with our \tqlscn lower bound, 
we saw that the Gorver-enhanced brute-force search always outperforms 
the Maculay matrix approach in case there is a unique solution and $d=n$ (or even $d+h\geq n$).
This in particular shows that the quantum algorithm achieves at most a quadratic speed-up compared to classical brute-force search. Moreover, if one chooses $d=3n$ and works with the max degree as Chen and Gao suggested~\cite{ChenGao2017}, then $\kappa_{\bvec}(\lemacmat) \geq (3n)^{h/2}$ and so
\begin{itemize}
	\item For $h = \Omega(\sqrt{n})$, the classical brute
	force algorithm is faster than the quantum algorithm. 
	\item For $h = \bigO{\sqrt{n}}$, it is unknown which is faster.
\end{itemize}

On the other hand in the Boolean case we have only the lower bound $\kappa_{\bvec}(M) \geq \frac{1}{2}\sqrt{(2^h-1)}$, so:

\begin{itemize}
	\item For $h$= $pn$, where $p \in (0,\frac{1}{2}]$,   the lower bound of
	$\kappa_{\bvec} (M)\geq \frac{1}{2}(2^h-1)^{1/2}$ 
	is exponentially large and exhaustive search takes time $\bigO{2^{H(p) n}}$
	where $H(p)=-p\log p-(1-p)\log (1-p)$ is the binary entropy function, as shown in \Cref{apx:binom}.
	\item For $h$= $\bigO1$, $\exists$ classical algorithm that takes time 
	$\bigO{\binom{n}{h}}$ to solve the problem efficiently by
	exhaustive search whereas the 
	lower bound of $\kappa(M)$ is a constant $(2^{\bigO1}-1)^{1/2 }$.
	\item For $h$= $\Theta (\log n)$, we only know that $ \kappa_{\bvec}(M)\geq \text{poly}(n)$ whereas 
	classical exhaustive search takes time $\bigO{\binom{n}{\log n}}$. 
	Thus, we cannot exclude the possibility that the quantum algorithm might give a quasi-polynomial speedup in this case. 
\end{itemize}
Without loss of generality, let $0 \leq h \leq \frac{n}{2}$ (otherwise we can flip all variables). 
Then the lower bound on the \tqlscn $\kappa(M)$ is always smaller than the time required by brute-force search. 
Thus, there is a possibility that the quantum algorithm performs better than the exhaustive search approach.

\section{Our new improved quantum algorithm}

\subsection{A Variant of the Quantum Coupon Collector Problem}
  
By~\cite[Corollary
3.19]{ChenGao2017} and Lemma~\ref{lem: Macaulaylinear}, if a set of polynomials $\calF$ over $\CC[x_1,\ldots,x_n]$ has a unique solution
$\s = (\s_1, \s_2, \cdots, \s_n)\in \{0,1\}^n$, then for some $d$ less than or equal to $n$,
the corresponding Boolean Macaulay linear system $M\svec = \bvec$ of total degree $d$ has a 
unique solution $\svec = M^{+}\bvec$,
where the entries of  $\svec$ are indexed by multilinear monomials in
$x_1,\ldots,x_n$ with total degree at most $d$.
Let $U=\{x_1,x_2,\dots,x_n\}$. There is a one-to-one correspondence between the subsets of
$U$ of size at most $d$ and multilinear monomials in $x_1,\ldots,x_n$
with total degree at most $d$. 
Let $S$ be the largest subset of $U$ such that all the variables $x_k \in S$ have
assignment $\s_k = 1$ and $S_d$ be the set containing all nonempty subsets of $S$
that have size at most $d$. There is a one-to-one correspondence between the elements of the
 set $S_d$ and the nonzero entries of $\svec$. 
Given implicit access to matrix $M$ and sparse vector $b$, the QLS algorithm outputs the solution vector $\svec$ as the quantum state
$\left| \svec\right\rangle $, which encodes the nonzero entries of $\svec$. Because the unique solution $\svec = M^{+}\bvec$ of the Boolean Macaulay linear system is a 0/1 vector,
 the quantum state 
$\left| \svec\right\rangle $ can be represented by
\[
	\left| \svec \right\rangle =\frac{1}{\sqrt{|S_d|}} \sum_{R \in S_d} \left| R \right\rangle .
\]
If we measure the quantum state $\left|  \svec\right\rangle $, we will get a uniformly random subset
$R \in S_d$, where all the variables $x_k \in R$ have assignment $\s_k = 1$.
Given copies of the quantum state $\left| \svec\right\rangle$, the goal is to compute $S$.

Next, we will reformulate this problem as a variant of the quantum coupon collector problem.

\begin{problem}\label{prob:QVCCP} 
	Let $S \subseteq U=\{x_1,x_2,\dots,x_n\}$ be an unknown subset and  $S_d$ be the set containing all nonempty subsets of $S$ that have size at most $d$.
	Given copies of the state
	\[
	\left| \svec \right\rangle =\frac{1}{\sqrt{|S_d|}} \sum_{R \in S_d} \left| R \right\rangle,
	\] which is a superposition of subsets of $S$ of size at most $d$.
	The goal is to compute $S$.
\end{problem} 
Specially, when $d$ equals $1$, this is the quantum coupon collector problem
defined in~\cite{arunachalam2020QuantumCouponCollector}. They proved that $\Theta(|S|\log (\text{min}\{|S|, n-|S|\}))$ copies of the states $\left| \svec \right\rangle$ are necessary to compute $S$.

Without loss of generality, we can assume $d$ is at most $|S|$ because when $d$ is greater than $|S|$, the quantum state $\left| \svec \right\rangle$ is the same as the case of $d$ equals $|S|$. Then, we have the following result of Problem~\ref{prob:QVCCP}.

\begin{theorem} \label{thm: QVCCP} 
Let $r=\bigO{(|S|/d )\log (|S|/\eps)}$. Measuring $r$ copies of the quantum superposition state in Problem~\ref{prob:QVCCP}, the set $S$ can be computed with probability at least $1 - \eps$.
\end{theorem}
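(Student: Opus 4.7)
My plan is to adopt the natural algorithm: measure each of the $r$ copies of $\ket{\svec}$ in the computational basis (where a basis state $\ket{R}$ encodes the subset $R\subseteq U$ as its indicator bitstring), collect the outcomes $R_1,\ldots,R_r\in S_d$, and output $\hat S := \bigcup_{i=1}^r R_i$. Since every outcome $R_i$ is a subset of $S$, we deterministically have $\hat S \subseteq S$, so the task reduces to showing that with probability at least $1-\eps$ every element of $S$ appears in at least one of the $R_i$.

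Fix $x\in S$ and let $p := \Pr[x \in R]$ with $R$ drawn uniformly from $S_d$. The symmetric group $\mathrm{Sym}(S)$ acts on $U$ (fixing $U\setminus S$), preserves $S_d$, and is transitive on $S$, so $\Pr[x\in R]$ does not depend on the choice of $x\in S$. Summing the indicator $\mathbf{1}[x\in R]$ over $x\in S$ gives $|R|$, hence
\[
p \;=\; \frac{\mathbb{E}[|R|]}{|S|} \;=\; \frac{\sum_{k=1}^{d} k\binom{|S|}{k}}{|S|\sum_{k=1}^{d}\binom{|S|}{k}}.
\]
Granting the technical claim that $\mathbb{E}[|R|] = \Omega(d)$ whenever $1\leq d\leq |S|$, we obtain $p = \Omega(d/|S|)$. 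By independence of the $r$ measurements and a union bound over $x\in S$,
\[
\Pr[\hat S \neq S] \;\leq\; |S|\,(1-p)^r \;\leq\; |S|\,e^{-rp} \;\leq\; |S|\,e^{-\Omega(rd/|S|)},
\]
which drops below $\eps$ as soon as $r = \Omega((|S|/d)\log(|S|/\eps))$, matching the claimed bound.

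The main obstacle is therefore the combinatorial estimate $\mathbb{E}[|R|]=\Omega(d)$, which I would handle by a two-case analysis. When $d\leq |S|/2$, the shift $j\mapsto j+\lceil d/2\rceil$ injects $\{1,\ldots,\lfloor d/2\rfloor\}$ into $(\lceil d/2\rceil,d]$ with $\binom{|S|}{j+\lceil d/2\rceil}\geq\binom{|S|}{j}$ (both indices lie in the non-decreasing range of $\binom{|S|}{\cdot}$), so at least half of the mass $\sum_{k=1}^{d}\binom{|S|}{k}$ sits on sizes $k\geq \lceil d/2\rceil$; weighting by $k$ yields $\mathbb{E}[|R|]\geq d/4$. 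When $d>|S|/2$, the induced distribution on $|R|$ is close to that of a uniform nonempty subset of $S$, whose mean size is approximately $|S|/2\geq d/2$, and a direct binomial calculation again gives $\mathbb{E}[|R|]=\Omega(d)$. Combining the two cases yields $p\geq cd/|S|$ for a universal constant $c>0$, which, plugged into the coupon-collector-style tail bound above, completes the proof.
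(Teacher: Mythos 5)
Your proposal is correct and follows the same overall template as the paper's proof: measure every copy, union the observed subsets, and apply a coupon-collector tail bound together with a union bound over $x\in S$, reducing everything to showing $p := \Pr[x\in R] = \Omega(d/|S|)$. Where you diverge is in the proof of that probability estimate. You reformulate $p$ as $\mathbb{E}[|R|]/|S|$ via symmetry and linearity of expectation, then lower-bound $\mathbb{E}[|R|]$ by a pairing argument: for $d\leq|S|/2$ the shift $j\mapsto j+\lceil d/2\rceil$ shows at least half the mass of $\sum_{k\leq d}\binom{|S|}{k}$ lies on $k\geq\lceil d/2\rceil$, hence $\mathbb{E}[|R|]\geq d/4$. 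The paper instead writes $p=\sum_{i=1}^d\binom{|S|-1}{i-1}\big/\sum_{i=1}^d\binom{|S|}{i}$ directly and bounds the denominator by a geometric-series estimate, $\sum_{i=1}^d\binom{|S|}{i}\leq\binom{|S|}{d}\frac{|S|-d+1}{|S|-2d+1}$, for $d\leq\lfloor|S|/3\rfloor$, then invokes monotonicity of $p$ in $d$ (the incremental ratios $\binom{|S|-1}{i-1}/\binom{|S|}{i}=i/|S|$ are increasing) to dispatch the remaining range $\lceil|S|/3\rceil\leq d\leq|S|$. Your $\mathbb{E}[|R|]$ reformulation is conceptually clean, but your large-$d$ case ($d>|S|/2$) is currently only sketched (``close to uniform, direct binomial calculation'') and as written even reverses an inequality: conditioning on $|R|\leq d$ can only pull the mean down from $\approx|S|/2$, so ``mean $\approx|S|/2\geq d/2$'' does not by itself give a lower bound. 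The cleanest patch is exactly the paper's monotonicity observation: $\mathbb{E}[|R|]$ (equivalently $p$) is non-decreasing in $d$, so for $d>|S|/2$ it inherits the bound from $d'=\lfloor|S|/2\rfloor$, giving $\mathbb{E}[|R|]\geq d'/4=\Omega(|S|)=\Omega(d)$ with no further binomial estimates. With that repair your argument is a valid alternative to the paper's.
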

Since the only quantum operation is a measurement in the computational basis, this is essentially a classical coupon collector problem, where we can sample a uniformly random subset. 
\begin{proof} 	 
For any $x\in S$, the number of sets $R\in S_d$ containing $x$ is 
$\sum_{i=1}^{d} \binom{|S|-1}{i-1} $ out of a total number of sets 
$\sum_{i=1}^d \binom{|S|}{i}$ in $S_d$.  Thus, the probability of seeing
$x$ equals $ \sum_{i=1}^{d}
\binom{|S|-1}{i-1} /\sum_{i=1}^d \binom{|S|}{i} $.
If $0 \leq d \leq
\lfloor\frac{|S|}{3}\rfloor $ and by Appendix \ref{apx:binom}, we have 
$\binom{|S|}{d} \leq \sum_{i=1}^{d}\binom{|S|}{i} \leq \binom{|S|}{d}
\frac{|S|-d+1}{|S|-2d+1}$, so $ \sum_{i=1}^{d}
\binom{|S|-1}{i-1} /\sum_{i=1}^d \binom{|S|}{i} \geq \frac{d}{|S|}\cdot\frac{|S|-2d+1}{|S|-d+1} $, where $\frac{|S|-2d+1}{|S|-d+1}>\frac{1}{2}$.  Hence,  when $0 \leq d \leq
\lfloor\frac{|S|}{3}\rfloor $, the probability of not seeing $x$ after $r$ tries is at most $(1-\frac{d}{|S|} \cdot \frac{1}{2})^r = (1-\frac{d}{2|S|} )^{- \frac{2|S|}{d} \frac{d}{2|S|}  r}  \leq  \exp(-\frac{d}{2|S|} r) $.   Since
$\frac{\binom{|S|-1}{0} }{\binom{|S|}{1}} <  \frac{\binom{|S|-1}{1}
  }{\binom{|S|}{2}} < \dots <  \frac{\binom{|S|-1}{d-1} }{\binom{|S|}{d}} $, the probability function $ \sum_{i=1}^{d}
\binom{|S|-1}{i-1} /\sum_{i=1}^d \binom{|S|}{i} $ is an increasing function. If $\lfloor\frac{|S|}{3}\rfloor \leq d \leq  |S|$ and $|S|=\Omega(1)$ \footnote{Note that when $|S|=O(1)$, the probability of seeing $x$ is at least a constant.},
 the probability function has the minimum value when $d = \lfloor\frac{|S|}{3}\rfloor $. For all three cases, we have
 $$
\frac{d}{|S|}\cdot\frac{|S|-2d+1}{|S|-d+1} = \left\{
 \begin{array}{cc}
 \frac{d+1}{6d+3} & \text{ when } |S|=3d \\
 
 \frac{d^2+2d}{6d^2+8d+2} & \text{ when } |S|=3d+1 \\
 
 \frac{d^2+3d}{6d^2+13d+6}& \text{ when } |S|=3d+2
 \end{array}
 \right.
 $$
 the probability of seeing $x$ is greater than  $ \frac{d}{|S|}\cdot\frac{|S|-2d+1}{|S|-d+1} \geq \frac{1}{6}$. Hence, the probability of not finding $x$ after $r$ tries is at most $(1-\frac{1}{6})^r$.

Let $r=\bigO{(|S|/d )\log (|S|/\eps)}$, and by the union bound, the probability of not collecting all the
elements $x$ in $S$ is at most $\eps$. That is, if  $r=\bigO{(|S|/d )\log (|S|/\eps)}$, the entire set $S$ can be recovered with probability at least $1 - \eps$.
\end{proof}
With respect to the choice of $d$, there is a trade-off between the number of samples and the memory space: 
\begin{itemize}
	\item For $d=O(1)$,	 $r = O(|S|\log |S|)$.
	\item For $d=O (\log |S|)$,  $r = O(|S|)$.
	\item For $d=O(\frac{|S|}{\log |S|})$,  $r = O(\log^2 |S|)$ 
	\item For $\frac{|S|}{c} \leq d\leq  n$, where $c$ is a positive integer, $r=O(\log |S|)$.
\end{itemize}


\subsection{The algorithm}
When a set of polynomials $\calF$ has a unique solution, \Cref{alg:MQC} finds the solution. If a set of polynomials has more than one solution, we apply the Valiant-Vazirani reduction~\ref{it:redVV} to get a set of polynomials $\calF$ that have a unique solution. 

 \begin{algorithm}[H]
    \caption{ Quantum linear system algorithm for $\calF$
		over $\mathbb{C}$}
	\begin{algorithmic} \label{alg:MQC}
		\REQUIRE $\calF \subseteq \mathbb{C}[x_1, \dots,x_n]$ 
		where	$\calF=\{{ f_1}, \dots, { f}_m\}  $ with
		$\deg(f_i) = 2$ for $i=1,\dots, m$.
		\ENSURE The solution $\s \in \{0,1\}^n$ such
		that $ { f_1} (\s) = \cdots= { f}_m(\s) = 0$ over $\CC$ when one exists.		
		\STATE Step 1: Apply a quantum linear system algorithm to the Boolean Macaulay linear system  $M\svec = \bvec$ of total degree $n$ and get the solution $\svec$ in quantum state 	\[
	\left| \svec \right\rangle =\frac{1}{\sqrt{|S_d|}} \sum_{R \in S_d} \left| R \right\rangle
	\].
	\STATE Step 2:  Perform measurement on the quantum state $\left| \svec
	\right\rangle $ and get outcome $\left| R \right\rangle $, then let all the variables in the set $R$ equal 1.
	\STATE Step 3:  Repeat Step 1 and Step 2  $ O (\log n)$ times, and then set all left remaining variables $\s_j = 0$.
	\STATE Step 4: Return  $\s$.
	\end{algorithmic}
  \end{algorithm}

\begin{lemma}
	With high probability Algorithm~\ref{alg:MQC} solves Problem~\ref{prob:MQC} in
	time $\bigOt{ \mathrm{poly}(n)\kappa(M)}$.
\end{lemma}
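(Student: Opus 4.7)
The plan is to combine three ingredients: the standard complexity guarantee for a QLS solver applied to the Boolean Macaulay linear system $M\svec=\bvec$, the uniform-superposition structure of its output state, and the coupon-collector analysis of \Cref{thm: QVCCP}.

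First I would verify correctness of Step~1. When $\calF$ has a unique Boolean solution $\s\in\{0,1\}^n$, Chen and Gao's uniqueness result combined with \Cref{lem: Macaulaylinear} implies that for $d=n$ the total-degree-$n$ Boolean Macaulay linear system $M\svec=\bvec$ has a unique solution $\svec=M^+\bvec$, which by construction is the $0/1$ indicator vector of $S_d$, where $S=\{x_k\colon \s_k=1\}$. Normalizing, the state produced by the QLS algorithm is (an approximation of) the uniform superposition
\[
\ket{\svec}=\frac{1}{\sqrt{|S_d|}}\sum_{R\in S_d}\ket{R},
\]
so a computational-basis measurement in Step~2 returns a uniformly random $R\in S_d$, and every variable in $R$ is correctly pinned to $1$.

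Second I would bound the per-iteration cost. By \Cref{lemma:sparsity}, $M$ is $\bigO{m\cdot\sparsity{\calF_1}}$-sparse with efficiently computable row/column oracles of the form \eqref{eq:prepA1}--\eqref{eq:prepA2}, and $\bvec$ is a standard basis vector and therefore trivially state-preparable. Plugging these inputs into a modern QLS solver such as \cite{childs2015QLinSysExpPrec} yields running time $\bigOt{\mathrm{poly}(n)\kappa(M)}$ per invocation, with only polylogarithmic dependence on the precision parameter $\eps$; this lets me take $\eps=1/\mathrm{poly}(n)$ while preserving the stated complexity, so that the distribution of each measurement outcome is within total variation distance $1/\mathrm{poly}(n)$ of the uniform distribution on $S_d$.

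Third I would invoke \Cref{thm: QVCCP} with $d=n$. Since $|S|\leq n=d$, the ratio $|S|/d\leq 1$, so the theorem guarantees that $r=\bigO{\log(|S|/\eps)}=\bigO{\log(n/\eps)}$ samples suffice to recover the full set $S$ with probability at least $1-\eps$. Setting $\eps=1/\mathrm{poly}(n)$ yields high-probability recovery in $\bigO{\log n}$ repetitions, after which Step~4 correctly assigns $0$ to the remaining coordinates, producing $\s$. The overall running time is $\bigO{\log n}\cdot\bigOt{\mathrm{poly}(n)\kappa(M)}=\bigOt{\mathrm{poly}(n)\kappa(M)}$. The only real subtlety is propagating the QLS approximation error through the coupon-collector argument, which I would handle by a union bound over the $\bigO{\log n}$ samples that absorbs the extra $\log(1/\eps)$ overhead into the $\mathrm{poly}(n)$ factor.
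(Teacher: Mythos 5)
Your proof is correct and follows essentially the same route as the paper's: establish sparsity and vector preparation, invoke a QLS solver, use \Cref{thm: QVCCP} with $d=n$ to get $\bigO{\log n}$ repetitions, and absorb the $\log(1/\eps)$ overhead into $\bigOt{\cdot}$. The only cosmetic difference is in the error budget: the paper tracks the fidelity of the $r$-fold tensor product state to show the per-iteration precision need only be $\eps=1/\Theta(\sqrt{\log n})$, whereas you take the cruder (but equally valid and arguably more robust) choice $\eps=1/\mathrm{poly}(n)$ with a union/hybrid bound over the $\bigO{\log n}$ samples.
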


\begin{proof}
For the Boolean Macaulay linear system $M\svec = \bvec$, the matrix $M$ is $\bigO{m \cdot \sparsity{\calF}}$-sparse and the vector $\vec{b}$ can be prepared as $| \bvec\rangle = \left| 0 \right\rangle^{ n \lceil{\log m}\rceil} $. Therefore, we can apply a QLS algorithm~\cite{childs2015QLinSysExpPrec} to the Boolean Macaulay linear system, which takes time
	$\bigOt{\mathrm{poly}(n) \kappa (M) \log (1/\eps)}$. The QLS algorithm outputs a quantum state $\left| 
	\svec^*\right\rangle $, which is an approximation of $\left| \svec\right\rangle$ with 
	$\nrm{\ket{\svec}  - \ket{\svec^*}} \leq \eps$.  
If we repeat the process $r$-times, then we essentially prepare the state $\ket{(\svec^*)^{\otimes r}}$ for which we have $\bracket{(\svec)^{\otimes r}}{(\svec^*)^{\otimes r}}=(\bracket{\svec}{\svec^*})^{r}=(1-\Theta(\eps^2))^r$. For $\eps=\bigO{1/r}$ we have that this equals $(1-\Theta(r\eps^2))$ and so $\nrm{\ket{(\svec)^{\otimes r}}  - \ket{(\svec^*)^{\otimes r}}} = \Theta(\sqrt{r}\eps)$.
Then the total variation distance between the two probability distributions of any measurements on the two states $\ket{(\svec)^{\otimes r}}$ and $\ket{(\svec^*)^{\otimes r}}$ is at most $\Theta(\sqrt{r}\eps)$~\cite[Exercise 4.3]{wolf2019QCLectureNotes} (see also \cite[Lemma 3.6]{bernstein1997quantum}), so replacing the ideal state $\ket{(\svec)^{\otimes r}}$ by the approximate state $\ket{(\svec^*)^{\otimes r}}$ induces error probability at most $\bigO{\sqrt{r}\eps}$.
By Lemma ~\ref{thm: QVCCP}, we can extract the solution of the polynomial system $\calF$ from $\ket{(\svec)^{\otimes r}}$ with high probability by choosing $r$ to be $O(\log n)$. 
Therefore, letting $\eps =  1/\Theta(\log n)$, with high probability Algorithm~\ref{alg:MQC} solves Problem~\ref{prob:MQC} in time $\bigOt{ \mathrm{poly}(n)\kappa(M)}$ .
\end{proof}

Compared with Chen and Gao's~\cite{ChenGao2017} algorithm, there are
two differences with Algorithm~\ref{alg:MQC}. First, the size of the
Boolean Macaulay matrix in Algorithm~\ref{alg:MQC} is $m2^n\times
2^n$, which leads to a smaller
lower bound of the \tqlscn and leaves a possibility of superpolynomial speedup
using Algorithm~\ref{alg:MQC}. By contrast, the size of the Macaulay matrix in Chen and Gao's
algorithm is $(m+n)(3n+1)^n\times (3n+1)^n$ \footnote{The parameters $m,n$ comes from Problem \ref{prob:MQC}}, which leads to a larger lower bound
of the \tqlscn that prohibits a potential quantum speedup.
Second, in Algorithm~\ref{alg:MQC}, the polynomial system have a unique solution, 
so in contrast to~\cite{ChenGao2017} the Boolean Macaulay linear system stays the same for every iteration and the number of iterations (measurements) required to obtain the solution of the polynomial system is $\bigO{\log n}$. However, the Valiant-Vazirani reduction needs $\bigO{n}$ iterations to generate a polynomial system that has a unique solution with high probability. This amounts to $\bigO{n \log n}$ iterations in total to find a solution. On the other hand, in Chen and Gao's algorithm, the polynomial system could have any finite number of solutions, so the Macaulay linear system needs to be updated after each iteration (measurement) and the number of iterations is $\bigO{n}$. 

\section{Discussion}

The Boolean Macaulay linear system approach is an interesting framework to study giving insights to the limitations and capabilities of quantum computation. On one hand, a lot of problems such as Factoring, Graph isomorphism, and Learning with binary errors can be put into this single framework. On the other hand, the QLS algorithm used for the (Boolean) Macaulay linear system is $\mathsf{BQP}$-complete and the Factoring problem is known to be inside BQP by Shor's algorithm, therefore, if we can find an approach to get around the curse of the condition number of the Boolean Macaulay linear system derived from the Factoring problem, then we might be able to extend the result to other problems, such as Graph Isomorphism and Learning with binary errors, revealing new capabilities of quantum computation. 

Our analytical lower bound on the condition number decreases when there are multiple solutions of the polynomial systems, but the polynomial systems used for cryptography purpose usually have one or few solutions~\cite{caminata2017solving}, so our result gives strong evidence that the QLS algorithm cannot be used for attacking cryptosystems via the Macaulay matrix approach. Also, we suspect that having many solutions will not make the QLS algorithm to work substantially better. For example consider adding $l$ new field equations $y_i^2-y_i=0$, then the number of solution of the new polynomial system will increase by a factor of $2^l$, however the length of the shortest vector stays the same -- indeed one can see that the shortest vector is an affine combination of solutions where all the new variables are set to $0$.\footnote{To see this consider the coordinates corresponding to monomials not including the new variables.} 

Given an ill-conditioned QLSP, two main approaches have been proposed, one is the truncated QLS algorithm, the other one is the preconditioned QLS algorithm \cite{harrow2009QLinSysSolver}. Our lower bound on the \tqlscn prohibits further speedup by the truncated QLS algorithm, however further investigation is needed regarding the possibility of using preconditioned QLS algorithms, such as parallel sparse approximate inverse preconditioner~\cite{clader2013preconditioned}, circulant preconditioner~\cite{shao2018quantum}, fast inversion~\cite{tong2020PrecQLinSysSolvAndMatFunEval}, or develop new preconditioned QLS algorithms for the Boolean Macaulay linear system. A promising feature of the Boolean Macaulay linear system is that it cannot be de-quantized by known classical techniques \cite{chia2019SampdSubLinLowRankFramework} since the Boolean Macaulay linear system has high (full) column rank. 

The introduced variant of the quantum coupon collector problem provides an example of how to extract the solution efficiently if the values in the solution vector of a QLSP are correlated in a nice pattern. Since applications of the QLS algorithm usually gain restricted access to the solution vector, a generalization of our extraction method, utilizing more general correlated patterns, could have interesting applications.

When the Hamming weight of the solution of a Boolean polynomial system is logarithmic in the number of variables, our lower bound does not rule out a superpolynomial speedup over exhaustive search. Finding a real application that exhibits such a superpolynomial speedup would be very interesting. However, for applications like polynomial systems over a finite field, the employed reductions usually increase the number of variables in the corresponding polynomial system significantly -- likely compromising the potential speedup.

\section*{Acknowledgements}
A.G.\ thanks Rachel Player for inspiring discussions. J.L.\ would like to thank Eric R Anschuetz, Yuan Su, Yu-Ao Chen, Xiao-Shan Gao, Sevag Gharibian, Antonio Blanca, Eunou Lee, Mahdi Belbasi, Mingming Chen, and Rachel Player for helpful discussions and conversations.

Part of this work was done while the authors visited the Simons Institute for the Theory of Computing; we gratefully acknowledge its hospitality.

J.D.\ acknowledges support from NSF grant SaTC-1814221 and Taft Foundation. 
V.G.\ acknowledges support from NSERC and CIFAR; IQC is supported in part by the Government of Canada and the Province of Ontario.
A.G.\ acknowledges funding provided by Samsung Electronics Co., Ltd., for the project ``The Computational Power of Sampling on Quantum Computers'', by the Institute for Quantum Information and Matter, an NSF Physics Frontiers Center (NSF Grant PHY-1733907), and also by the EU's Horizon 2020 Marie Skłodowska-Curie program 891889-QuantOrder.
S.H.\ was partially supported by National Science Foundation awards CCF-1618287, CNS-1617802, and CCF-1617710, and by a Vannevar Bush Faculty Fellowship from the US Department of Defense.

\bibliographystyle{alphaUrlePrint} 
\bibliography{FACbib,hhl,qc_gily}

\appendix
\section{Simple proof of the unique solution case} \label{append:simple}
Here we present a simple proof for the correctness of Algorithm
\ref{alg:MQC} for Problem \ref{prob:MQC} when it has a unique
solution. Let $\s=(\s_1, \s_2,\dots \s_n) \in \{0,1 \}^n$ be the unique solution 
of a set of polynomials $\calF$.  
Let $\hat{y}=[\s_1, \s_2,\dots, \s_i \s_j,\dots , \prod_{i=1}^{n}\s_i]^{\top}$ be
the $0/1$ solution vector labeled by the multilinear monomials under the assignment $\s$.
 
Next, we will show that the Boolean Macaulay linear system $M\svec = \bvec$ has the
unique solution $\hat{y}$ when $\calF$ has the unique solution $\s$. 
In this case, we have $\hat{y}=M^{+}\bvec$ because the matrix $M$ has linearly independent columns. 
When $\calF$ has more than one solution, the columns of the matrix $M$ are not linearly independent and the solutions of $M\svec = \bvec$ form a multidimensional affine subspace.

\begin{lemma}\label{lem:idealunique}~\cite[Theorem 2]{ars2004comparison}
	If a set of polynomials
	$\calF \subseteq \mathbb{C}[x_1,\dots, x_n]$ has a unique solution
	$\s= (\s_1, \s_2, \dots, \s_n)$, then the following two polynomial ideals coincide
	\[\left\langle \calF \right\rangle = \left\langle x_1 - \s_1, x_2 -
	\s_2, \dots, x_n - \s_n \right\rangle.\]
\end{lemma}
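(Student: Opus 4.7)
The plan is to establish the two inclusions of the claimed ideal equality separately. Writing $\mathfrak{m}_{\s} := \langle x_1 - \s_1, \ldots, x_n - \s_n \rangle$, the inclusion $\langle \calF \rangle \subseteq \mathfrak{m}_{\s}$ is immediate: by hypothesis every $f \in \calF$ satisfies $f(\s) = 0$, and $\mathfrak{m}_{\s}$ is precisely the kernel of the $\mathbb{C}$-algebra evaluation homomorphism $\mathbb{C}[x_1,\ldots,x_n] \to \mathbb{C}$ sending $x_i \mapsto \s_i$. Hence every generator of $\langle \calF \rangle$ lies in $\mathfrak{m}_{\s}$.

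For the reverse inclusion my plan is to invoke Hilbert's Strong Nullstellensatz over $\mathbb{C}$. The hypothesis that $\calF$ has a unique solution states exactly that $V(\langle \calF \rangle) = \{\s\} \subseteq \mathbb{C}^n$, so the Nullstellensatz yields
\[
\sqrt{\langle \calF \rangle} \;=\; I\bigl(V(\langle \calF \rangle)\bigr) \;=\; I(\{\s\}) \;=\; \mathfrak{m}_{\s}.
\]
In particular each linear generator $x_i - \s_i$ lies in the radical, i.e.\ some positive power $(x_i - \s_i)^{k_i}$ already belongs to $\langle \calF \rangle$.

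The main obstacle is to upgrade this radical-level equality to the honest ideal-level equality $\mathfrak{m}_{\s} \subseteq \langle \calF \rangle$, which forces $\langle \calF \rangle$ itself to be radical (equivalently, maximal, since $\mathfrak{m}_{\s}$ is maximal and contains it). This step does not follow from the bare hypothesis ``$\calF$ has a unique solution''; a counterexample is $\calF = \{x^2\} \subseteq \mathbb{C}[x]$, whose only root is $0$ but for which $\langle x^2 \rangle \subsetneq \langle x \rangle = \mathfrak{m}_0$. Thus the real content of the lemma is a non-degeneracy / reducedness condition on the unique solution, and to close the proof I would need to import such a condition from the cited source~\cite{ars2004comparison}. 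Concretely, it suffices to assume that the Jacobian $J\calF(\s)$ has full column rank $n$: then Nakayama's lemma applied to the local ring $\mathbb{C}[X]_{\mathfrak{m}_{\s}}$ produces, for each $i$, an element of $\langle \calF \rangle$ of the form $(x_i - \s_i) + r_i$ with $r_i \in \mathfrak{m}_{\s}^2$, and iterated substitution using the already-established relations $(x_j - \s_j)^{k_j} \in \langle \calF \rangle$ eliminates the higher-order remainder $r_i$ and places $x_i - \s_i$ in $\langle \calF \rangle$, completing the reverse inclusion.
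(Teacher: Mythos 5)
The paper does not actually prove this lemma; it is cited verbatim from \cite{ars2004comparison}, so there is no in-paper argument to compare against. Your analysis of the statement itself is nevertheless exactly right, and you have caught a real issue: the inclusion $\langle\calF\rangle\subseteq\langle x_1-\s_1,\dots,x_n-\s_n\rangle$ is immediate and the Nullstellensatz gives $\sqrt{\langle\calF\rangle}=\langle x_1-\s_1,\dots,x_n-\s_n\rangle$, but the bare hypothesis ``$\calF$ has a unique solution'' does not force $\langle\calF\rangle$ to be radical, and $\{x^2\}$ is a valid counterexample to the lemma as literally written.

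Where your proposal diverges from what the paper actually needs is in the choice of the extra hypothesis. You suggest a full-rank Jacobian at $\s$, which is indeed a sufficient reducedness condition, but it is not what is implicitly in force here. Every place the lemma is invoked (see \cref{thm: uniquesolution}) the system is $\calF=\calF_1\cup\calF_2$ with $\calF_2=\{x_1^2-x_1,\dots,x_n^2-x_n\}$, and the field equations alone make the ideal radical without any Jacobian assumption. Concretely, $x_j-\s_j\in\sqrt{\langle\calF\rangle}$ gives $(x_j-\s_j)^{k_j}\in\langle\calF\rangle$ for some $k_j$, while $x_j(x_j-1)\in\calF_2$ gives $(x_j-\s_j)^2\equiv\pm(x_j-\s_j)\pmod{\langle\calF\rangle}$ (with sign depending on $\s_j\in\{0,1\}$); iterating yields $(x_j-\s_j)^{k_j}\equiv\pm(x_j-\s_j)$, so $x_j-\s_j\in\langle\calF\rangle$. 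Equivalently, $\CC[X]/\langle\calF_2\rangle\cong\CC^{2^n}$ is reduced, and any further quotient of a reduced finite product of fields is reduced. So the honest fix for the lemma statement is to add the hypothesis $\calF\supseteq\calF_2$ (as \cite{ars2004comparison} does in its finite-field setting with $x_i^q-x_i$); your Jacobian route closes the same gap but imports a condition the application does not satisfy in general and does not need.
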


\begin{theorem} \label{thm: uniquesolution}
  Given a set of polynomials $\calF = \calF_1 \cup \calF_{2} \subseteq \mathbb{C}[x_1,\dots, x_n]$, where
  $\calF_1=\{f_1, f_2,\dots, f_m\}$ and $\calF_2=\{x_1^2-x_1, x_2^2-x_2,\dots,
  x_n^2-x_n\}$. Suppose  $\calF$ has a unique solution $\s=(\s_1,\dots,\s_n)$, where $\calF_2$ forces the root of the set of polynomials $\calF_1$ to be Boolean. Let $\hat{y}$ be the multilinear monomial solution vector corresponding to the solution $\s$,
  then the Boolean Macaulay linear system $M\svec=\bvec$ of total degree $n$ has the unique solution $\hat{y}$ $=M^{+}\bvec$.
\end{theorem}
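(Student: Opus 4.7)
The plan is to establish existence and uniqueness separately. For existence, I would verify directly that $M\hat{y}=\bvec$. Each row of $\bmacmat=[M \mid -\bvec]$ is the coefficient vector of some $\psi(\mlmon f)$ for a multilinear monomial $\mlmon$ and $f\in\calF_1$. Pairing such a row with the extended vector $(\hat{y},1)$, where the final $1$ sits at the constant-monomial coordinate, produces the value of $\psi(\mlmon f)$ at $\s$, because $\hat{y}_{\mlmonb}=\mlmonb(\s)$. Since $\s$ is Boolean, $\psi(\mlmon f)(\s)=(\mlmon f)(\s)=\mlmon(\s)\,f(\s)=0$ using $f(\s)=0$ for $f\in\calF_1$. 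Consequently $\bmacmat\cdot(\hat{y},1)=0$, i.e., $M\hat{y}=\bvec$.

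The main content is uniqueness. I would invoke \Cref{lem:idealunique} to obtain $\langle \calF\rangle=\langle x_1-\s_1,\ldots,x_n-\s_n\rangle$. For any multilinear monomial $\mlmonb$, the polynomial $\mlmonb-\mlmonb(\s)$ vanishes at $\s$ and so lies in this common ideal; hence
\begin{equation*}
\mlmonb-\mlmonb(\s)=\sum_{f_i\in\calF_1} g_i f_i + \sum_{j=1}^n h_j(x_j^2-x_j)
\end{equation*}
for some polynomials $g_i,h_j\in\CC[x_1,\ldots,x_n]$. Applying the linear map $\psi$ to both sides, the left-hand side is unchanged because $\mlmonb-\mlmonb(\s)$ is already multilinear, while each $\psi(h_j(x_j^2-x_j))$ vanishes since $\psi(p\cdot x_j^2)=\psi(p\cdot x_j)$ for every polynomial $p$. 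Expanding each $g_i$ as a $\CC$-linear combination of monomials $\mon_{ik}$ and using the identity $\psi(\mon_{ik}f_i)=\psi(\psi(\mon_{ik})f_i)$, every surviving term is exactly a scalar multiple of a row of $\bmacmat$ indexed by the multilinear monomial $\psi(\mon_{ik})$ together with $f_i$. The degree constraint $d=n$ is automatically met because all multilinear polynomials in $n$ variables have total degree at most $n$. Thus the coefficient vector of $\mlmonb-\mlmonb(\s)$ lies in the row span of $\bmacmat$.

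To close the argument, let $\svec$ be any solution of $M\svec=\bvec$, equivalently $\bmacmat\cdot(\svec,1)=0$. Pairing $(\svec,1)$ with the row-span expression for $\mlmonb-\mlmonb(\s)$ yields $\svec_{\mlmonb}-\mlmonb(\s)=0$, so $\svec_{\mlmonb}=\hat{y}_{\mlmonb}$ for every nontrivial multilinear $\mlmonb$. This shows $\svec=\hat{y}$ is the unique solution, and in particular $M^{+}\bvec=\hat{y}$ since the Moore-Penrose inverse returns the unique solution when one exists.

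The main obstacle is handling the non-multiplicative operation $\psi$: one must carefully verify the two identities $\psi(p\cdot x_j^2)=\psi(p\cdot x_j)$ and $\psi(\mon f)=\psi(\psi(\mon)f)$, each an elementary monomial-by-monomial check, but together they are what allow us to translate the ideal-membership witness for $\mlmonb-\mlmonb(\s)$ into an explicit row-span expression in $\bmacmat$ and thereby pin down every coordinate of any candidate solution.
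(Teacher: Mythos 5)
Your proof is correct and follows the same overall plan as the paper: use \Cref{lem:idealunique} to place each multilinear monomial difference in $\langle\calF\rangle$, then push the ideal-membership witness through $\psi$ to exhibit that difference as a row-span element of $\bmacmat$, and finally read off every coordinate of any candidate solution. Two small differences are worth noting. First, the paper establishes $\mlmonb-\mlmonb(\s)\in\langle\calF\rangle$ by induction on the degree of $\mlmonb$, building the membership certificate one variable at a time via the identity $x_k X^{\beta} - s_k\prod s_i^{\beta_i}=x_k(X^\beta-\prod s_i^{\beta_i})+(x_k-s_k)\prod s_i^{\beta_i}$; you instead invoke the characterization $\langle x_1-\s_1,\dots,x_n-\s_n\rangle=\{p: p(\s)=0\}$, which is shorter and avoids the induction entirely. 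Second, you explicitly verify existence ($M\hat{y}=\bvec$), whereas the paper's argument focuses on showing $M$ has full column rank and leaves existence implicit; making it explicit is a genuine improvement, since full column rank alone only gives uniqueness conditional on solvability, and the Moore-Penrose formula $\hat{y}=M^+\bvec$ needs $\bvec\in\operatorname{col}(M)$. Your care in stating and using the identities $\psi(p\cdot x_j^2)=\psi(p\cdot x_j)$ and $\psi(\mon f)=\psi(\psi(\mon)f)$ also makes the translation from ideal membership to row-span more transparent than in the paper, where these facts are used silently.
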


\begin{proof}
		First, we prove that for all the 
	nontrivial multilinear monomials $X^{\beta}$,  
	the polynomial $X^{\beta} - \prod_{i=1}^{n} \s_i^{\beta_i} =\prod_{i=1}^{n} x_i^{\beta_i} - \prod_{i=1}^{n} \s_i^{\beta_i}$
	is in $\left\langle \calF\right\rangle $, where $\beta \in
	\{0,1\}^n\backslash \{0^n\}$. The proof is by induction on the degree
	$d$.  For the base case $d=1$, by Lemma~\ref{lem:idealunique}, for all $1 \leq k \leq n$, 
	$ x_k - \s_k \in
	\left\langle \calF\right\rangle$. That is, for each $k$, there exist $p_i, q_j$ 
	such that $ x_k - \s_k  = \sum_{i=1}^{m} p_i f_i +\sum_{j=1}^{n} q_j (x_j^2-x_j) $.  
   Let $B_d=\{\text{all multilinear monomials with degree $d$}\}$, and suppose the claim is true for $d$, that is, for any 
	$ X^{\beta^{'}}_{d} \in B_d$, $ X^{\beta^{'}}_{d}-  \prod_{i=1}^{n} \s_i^{\beta^{'}_{i}} \in \left\langle \calF\right\rangle$.
	For any $X^\beta_{d+1} \in B_{d+1}$, there exists some $ x_k$ and $X^{\beta^{'}}_{d}$ such that 
	$X^\beta_{d+1}= x_k \cdot  X^{\beta^{'}}_{d} $ and $\prod_{i=1}^{n} \s_i^{\beta_i}= \s_k \cdot \prod_{i=1}^{n} \s_i^{\beta^{'}_{i}}$, 
 then 
	\[
	X^\beta_{d+1} -  \prod_{i=1}^{n} \s_i^{\beta_i} =  x_k  (X^{\beta^{'}}_{d} - \prod_{i=1}^{n} \s_i^{\beta^{'}_{i}}) +  (x_k - \s_k)\prod_{i=1}^{n} \s_i^{\beta^{'}_{i}} ,
	\] which implies that $ X^\beta_{d+1} -  \prod_{i=1}^{n} \s_i^{\beta_i} \in \left\langle \calF\right\rangle $. 
	Therefore,  for all  nontrivial multilinear monomials $X^{\beta}$, there exists $p_{i\beta}, q_{j\beta} \in \C[x_1,\dots, x_n]$  such that 
	$X^{\beta} -  \prod_{i=1}^{n} \s_i^{\beta_i} = \sum_{i=1}^{m} p_{i\beta} f_i +\sum_{j=1}^{n} q_{j\beta} (x_j^2-x_j)\in  \left\langle \calF\right\rangle 
	$, where $ i \in [m], j \in [n]$.
	
The Boolean Macaulay matrix
	$\left[\begin{array}{cc}
	M  & -\bvec
	\end{array}\right] 
 $ 
  is the augmented matrix of the Boolean Macaulay linear system $M\svec=\bvec$ of the
	set of polynomials $\calF$.
	Since $X^{\beta} -  \prod_{i=1}^{n} \s_i^{\beta_i} = \psi\left(X^{\beta} -  \prod_{i=1}^{n} \s_i^{\beta_i}\right) =  \sum_{i=1}^{m} \psi(p_{i\beta} f_i)$, and polynomial addition, subtraction, and multiplication in the polynomial ideal  $\left\langle \calF\right\rangle$ correspond to 
	row operations of the Boolean Macaulay matrix $\left[\begin{array}{cc}
	M & -\bvec
	\end{array}\right]$, we can perform row operations on the Boolean Macaulay matrix according to $\sum_{i=1}^{m} \psi(p_{i\beta} f_i)$.  By those row operations we can obtain an extended matrix of form 
	$\left[ \begin{array}{cc}
	M & -\bvec \\
	I & -\svec
	\end{array} \right]$, 
	where the columns of the identity matrix $I$ are indexed by the nontrivial multilinear monomials and entries of $\svec$ are values of $ \prod_{i=1}^{n} \s_i^{\beta_i}$. As performing row operations on a matrix does not change the matrix rank, the matrix $M$ must have full column rank.
	Therefore, the Boolean Macaulay linear system  has the unique solution $\hat{y}= M^+\bvec $. 
\end{proof}

\section{Bounds on binomial coefficients}\label{apx:binom}
In this appendix we derive some standard bounds on binomial coefficients for completeness. First we show that for $h\leq n/2$
\begin{equation}\label{eq:binomEntr}
\sum_{j=0}^{h}\binom{n}{j}\leq 3\sqrt{h}\binom{n}{h}.
\end{equation}
We use the following upper bound \cite[Corollary 22.9]{jukna2011ExtremalCombi2} on binomial coefficients 
\begin{align*}
\forall\, 0 < h \leq n/2 :\,\, \sum_{j=0}^h\binom{n}{j} &\leq 2^{n\cdot H(h/n)}\\
&=2^{n\left(-\frac{h}{n}\log_2\left(\frac{h}{n}\right)-\frac{n-h}{n}\log_2\left(\frac{n-h}{n}\right)\right)}\\
&=\Big(\frac{n}{h}\Big)^{\!h}\Big(\frac{n}{n-h}\Big)^{\!\!n-h},
\end{align*}
in combination with Stirling's approximation~\cite{robbins1955RemarkOnStrilingsFormula} $\sqrt{2\pi}\sqrt{n}\left(\frac{n}{e}\right)^{\!n}\leq n! \leq e \sqrt{n}\left(\frac{n}{e}\right)^{\!n}$, yielding
\begin{align*}
\binom{n}{h}=\frac{n!}{(n-h)!h!}
&\geq \frac{\sqrt{2\pi}\sqrt{n}\left(\frac{n}{e}\right)^{\!n}}
{e\sqrt{n-h}\left(\frac{n-h}{e}\right)^{\!n-h}e\sqrt{h}\left(\frac{h}{e}\right)^{\!h}}\\
&=\frac{\sqrt{2\pi}}{e^2}\sqrt{\frac{n}{n-h}}\frac{1}{\sqrt{h}}\Big(\frac{n}{h}\Big)^{\!h}\Big(\frac{n}{n-h}\Big)^{\!\!n-h}\\
&\geq\frac{1}{3\sqrt{h}}\sum_{j=0}^h\binom{n}{j}.
\end{align*}

Another bound that we use is 
\begin{equation}\label{eq:binomGeom}
\sum_{j=1}^{h}\binom{n}{i} \leq \binom{n}{h}\frac{n-h+1}{n-2h+1},
\end{equation}
which can be shown by the summation of an upper bound by a geometric series, i.e.,
\begin{align*}
 \sum_{i=1}^{h}\binom{n}{i}& = \binom{n}{h} (1 + \binom{n}{h-1}/\binom{n}{h} +  \binom{n}{h-2}/\binom{n}{h}\\&+\cdots+ \binom{n}{1}/\binom{n}{h}) \\
 &\leq \binom{n}{h} (1 +   h/(n-h+1) +    h^2/(n-h+1)^2\\&+\cdots+  h^{h-1}/(n-h+1)^{h-1})\\
 &\leq \binom{n}{h}\frac{1}{1-h/(n-h+1)}\\
 &= \binom{n}{h}\frac{n-h+1}{n-2h+1}.
 \end{align*}
 

\end{document}